\newtheorem{theorem}{Theorem}[section]
\newtheorem{lemma}[theorem]{Lemma}
\newtheorem{remark}[theorem]{Remark}
\newtheorem{corollary}[theorem]{Corollary}
\newtheorem{proposition}[theorem]{Proposition}
\newcommand{\dx}{\, \mbox{\rm d}}
\newcommand{\Sh}{\mbox{\rm Sh}}
\begin{document}
\title[Observables on Quantum Structures]{Observables on Quantum Structures}
\author[A. Dvure\v{c}enskij, M. Kukov\'a]{Anatolij Dvure\v censkij$^{1,2}$, M\'aria Kukov\'a$^{3}$}
\date{}
\maketitle
\begin{center}  \footnote{Keywords: Effect algebra, observable, monotone
$\sigma$-completeness, state, Loomis-Sikorski theorem, effect-tribe,
Riesz decomposition property.

 AMS classification: 81P15, 03G12, 03B50

The paper has been supported by Center of Excellence SAS -~Quantum
Technologies~-,  ERDF OP R\&D Project
meta-QUTE ITMS 26240120022, the grant VEGA No. 2/0059/12 SAV and by
CZ.1.07/2.3.00/20.0051. }
Mathematical Institute,  Slovak Academy of Sciences,\\
\v Stef\'anikova 49, SK-814 73 Bratislava, Slovakia\\
$^2$ Depar. Algebra  Geom.,  Palack\'{y} Univer.\\
CZ-771 46 Olomouc, Czech Republic\\
$^3$ Dep. Math., Faculty of Natural Sciences\\
Matej Bel University,
Tajovsk\'eho 40\\ SK-974 01 Bansk\'a Bystrica,
Slovakia\\
E-mail: {\tt
dvurecen@mat.savba.sk}, \ {\tt maja.kukova@gmail.com}
\end{center}

\begin{abstract} An observable on a quantum structure is any $\sigma$-homomorphism of quantum structures from the Borel $\sigma$-algebra into the quantum structure. We show that our partial information on an observable known only for all intervals of the form $(-\infty,t)$ is sufficient to determine uniquely the whole observable defined on quantum structures like $\sigma$-MV-algebras, $\sigma$-effect algebras, Boolean $\sigma$-algebras,  monotone $\sigma$-complete effect algebras with the Riesz Decomposition Property, the effect algebra of effect operators of a Hilbert space, and a system of functions, and an effect-tribe.

\end{abstract}

\section{Introduction}

Effect algebras were introduced by Foulis and Bennett \cite{FoBe} to model quantum mechanical events. This is an algebraic structure with the primary notion~- addition of mutually excluding events. A prototypical example of effect algebras is the set $\mathcal E(H)$ of all Hermitian operators of a Hilbert space $H$ which lie between the zero and the identity operators. During the last two decades effect algebras became the most important part of theory of quantum structures which studies orthomodular lattices, orthomodular posets, Boolean algebras, MV-algebras, etc. We recall that MV-algebras are algebraic semantics for many-valued logic introduced by Chang \cite{Cha}

To perform a measurement, in particular a quantum measurement, we need an analogue of a random variable, which is in our case an observable, a kind  of a $\sigma$-homomorphism of effect algebras from the Borel $\sigma$-algebra $\mathcal B(\mathbb R)$ into the given quantum structure preserving only partial addition. In quantum mechanics, an observable is simply a POV-measure (= positive operator valued measure).  Observables were studied by many authors with different aims. In the last period in the series of papers \cite{Pul, JPV, JPV1}, the authoresses concentrated to observables studied on lattice effect algebras and $\sigma$-MV-algebras exhibiting spectral properties and smearing of fuzzy observables by sharp observables using a kind of a Markov kernel. In \cite{ChKo} there were presented conditions for a functional calculus of observables in D-posets.

In the present paper, we concentrate to a question whether our information about an observable $x$ of a quantum structure concentrated only on all intervals $(-\infty, t),$ $t\in \mathbb R,$ is sufficient to derive the whole information about $x.$  We show that this is possible and we prove it for observables on $\sigma$-MV-algebras, $\sigma$-lattice effect algebras, Boolean $\sigma$-algebras, quantum logics, monotone $\sigma$-complete effect algebras with the Riesz Decomposition Property (RDP),  $\mathcal E(H),$ and  effect-tribes.   This was possible to show thanks some generalizations of the famous Loomis-Sikorski Theorem from Boolean $\sigma$-algebras to $\sigma$-MV-algebras \cite{Dvu, Mun} and to monotone $\sigma$-complete effect algebras with RDP, \cite{Dvu1}. We recall that RDP means roughly speaking a possibility to have for all two decompositions a joint refined decomposition, and thus RDP is a kind of a weak form of the distributivity.

The paper is organized as follows. In Section 2, we gather elements of theory of effect algebras, and the body of the paper is in Section 3. Section 4 indicates some possible use of observables.

\section{Elements of Effect Algebras and MV-algebras}

We recall that according to \cite{FoBe}, an {\it effect algebra} is  a partial algebra $E =
(E;+,0,1)$ with a partially defined operation $+$ and two constant
elements $0$ and $1$  such that, for all $a,b,c \in E$,
\begin{enumerate}

\item[(i)] $a+b$ is defined in $E$ if and only if $b+a$ is defined, and in
such a case $a+b = b+a;$

 \item[(ii)] $a+b$ and $(a+b)+c$ are defined if and
only if $b+c$ and $a+(b+c)$ are defined, and in such a case $(a+b)+c
= a+(b+c);$

 \item[(iii)] for any $a \in E$, there exists a unique
element $a' \in E$ such that $a+a'=1;$

 \item[(iv)] if $a+1$ is defined in $E$, then $a=0.$
\end{enumerate}

If we define $a \le b$ if and only if there exists an element $c \in
E$ such that $a+c = b$, then $\le$ is a partial ordering on $E$, and
we write $c:=b-a.$ It is clear that $a' = 1 - a$ for any $a \in E.$ As a primary source of information about effect algebras we can recommend the monograph \cite{DvPu}. It is important to note that an effect algebra is not necessarily a lattice. We recall that a {\it homomorphism}  from an effect algebra $E_1$ into another one $E_2$ is any mapping $h: E_1 \to E_2$ such that  (i) $h(1)=1$ (ii) if $a+b$ is defined in $E_1$ so is defined $h(a)+h(b)$ in $E_2$ and $h(a+b)= h(a)+h(b).$

For example, let $E$ be a system of fuzzy sets on $\Omega,$ that is $E \subseteq [0,1]^\Omega,$ such that
(i) $1 \in E$, (ii) $f \in E$ implies $1-f \in E$, and (iii) if $f,g
\in E$ and $f(\omega) \le 1 -g(\omega)$ for any $\omega \in \Omega$,
then $f+g \in E$. Then $E$ is an effect algebra of fuzzy sets which
is not necessarily a Boolean algebra. In addition, if $G$ is a partially ordered group written additively, $u \in G^+$, then $\Gamma(G,u):=[0,u]=\{g \in G: 0 \le g \le u\}$ is an effect algebra with $0=0,$ $1=u$ and $+$ is the group addition of elements if it exists in $\Gamma(G,u).$

We say that an effect algebra $E$ satisfies the Riesz Decomposition Property (RDP for short) if for all $a_1,a_2,b_1,b_2 \in E$ such that $a_1 + a_2 = b_1+b_2,$ there are four elements $c_{11},c_{12},c_{21},c_{22}$ such that $a_1 = c_{11}+c_{12},$ $a_2= c_{21}+c_{22},$ $b_1= c_{11} + c_{21}$ and $b_2= c_{12}+c_{22}.$

We define $\sum_{i=1}^n a_i:= a_1+\cdots +a_n$, if the element on the right-hand exists in $E.$ A system of elements $\{a_i: i \in I\}$ is said to be {\it summable} if, for any finite set $F$ of $I,$ the element $a_F:= \sum_{i\in F} a_i$ is defined in $E.$  If there is an element $a:= \sup \{a_F:  F$ is a finite subset of $I\},$ we call it the {\it sum} of $\{a_i: i \in I\}$ and we write $a = \sum\{a_i: i \in I\}.$

An effect algebra $E$ is {\it monotone} $\sigma$-{\it complete} if, for any sequence $a_1 \le a_2\le \cdots,$ the element $a = \bigvee_n a_n$  is defined in $E$ (we write $\{a_n\}\nearrow a$). This is equivalent that every summable sequence has a sum.

An {\it effect-tribe}  is any system ${\mathcal T}$ of fuzzy sets on
$\Omega\ne \emptyset $ such that (i) $1 \in {\mathcal T}$, (ii) if $f
\in {\mathcal T},$ then $1-f \in {\mathcal T}$, (iii) if $f,g \in {\mathcal T}$,
$f \le 1-g$, then $f+g \in {\mathcal T},$ and (iv) for any sequence
$\{f_n\}$ of elements of ${\mathcal T}$ such that $f_n \nearrow f$
(pointwise), then $f \in {\mathcal T}$. It is evident that any
effect-tribe is a monotone $\sigma$-complete effect algebra.

A very important family of effect algebras is the family of
MV-algebras, which were introduced by Chang \cite{Cha}.

We recall that an MV-algebra is an algebra $M = (M;\oplus, ^*,0,1)$
of type (2,1,0,0) such that, for all $a,b,c \in M$, we have

\begin{enumerate}
\item[(i)]  $a \oplus  b = b \oplus a$;
\item[(ii)] $(a\oplus b)\oplus c = a \oplus (b \oplus c)$;
\item[(iii)] $a\oplus 0 = a;$
\item[(iv)] $a\oplus 1= 1;$
\item[(v)] $(a^*)^* = a;$
\item[(vi)] $a\oplus a^* =1;$
\item[(vii)] $0^* = 1;$
\item[(viii)] $(a^*\oplus b)^*\oplus b=(a\oplus b^*)^*\oplus a.$
\end{enumerate}

If we define a partial operation $+$ on $M$ in such a way that $a+b$
is defined in $M$ if and only if $a \le b^*$ and  we set
$a+b:=a\oplus b$, then $(M;+,0,1)$ is an effect algebra with RDP.
For example, if $G$ is a lattice ordered group, and $u \ge0,$ then
$(\Gamma(G,u); \oplus, ^*, 0,u)$, where  $\Gamma(G,u):= \{g\in G: 0\le g \le u\},$ $a\oplus b :=(a+b) \wedge u$, and $a^* := u-a$ $(a,b \in \Gamma(G,u))$ is a prototypical example of MV-algebras. We recall that every MV-algebra is a distributive lattice.

We recall that a {\it tribe} on $\Omega \ne \emptyset$
is a collection ${\mathcal T}$ of fuzzy sets from $[0,1]^\Omega$ such
that (i) $1 \in {\mathcal T}$, (ii) if $f \in {\mathcal T}$, then $1 - f \in
{\mathcal T},$ and (iii) if $\{f_n\}$ is a sequence from ${\mathcal T}$,
then $\min \{\sum_{n=1}^\infty f_n,1 \}\in {\mathcal T}.$  A tribe is
always a $\sigma$-complete MV-algebra. For example if $f_n = \chi_{A_n},$ then  $\min \{\sum_{n=1}^\infty \chi_{A_n},1 \} = \chi_{\bigcup_n A_n}.$

\section{Observables}

Let $M$ be a $\sigma$-MV-algebra. A mapping $x: \mathcal B(\mathbb R) \to M$ is said to be an {\it observable} on $M$ if (i) $x(\mathbb R)=1,$ (ii) if $E$ and $F$ are mutually disjoint Borel sets, then $x(E \cup F)=x(E)+x(F),$ where $+$ is the partial addition on $M,$ and (iii) if $\{E_i\}$ is a sequence of Borel sets such that $E_i \subseteq E_{i+1}$ for every $i$ and $E= \bigcup_i E_i,$ then $x(E) = \bigvee_i x(E_i).$ In other words, an observable is a $\sigma$-homomorphism of effect algebras.

The definition of an observable can be literally  extended also for monotone $\sigma$-complete effect algebras.

We recall that for all $E,F \in \mathcal B(\mathbb R)$, we have (i) $x(\mathbb R \setminus E) = x(E)',$ (ii) $x(\emptyset) = 0,$ (iii)  $x(E)\le x(F)$ whenever $E \subseteq F$ and $x(F \setminus E) = x(F) - x(E)$ whenever $E \subseteq F,$ (iv) if $F_i \supseteq F_{i+1}$ and $F = \bigcap_i F_i$, then $x(F) = \bigwedge _i x(F_i).$

For example, let $\{a_n: n \in N\}$ be a finite or infinite sequence of summable elements, $\sum_{n\in N}a_n=1,$ of a monotone $\sigma$-complete effect algebra $M$ and let $\{t_n: n \in N\}$ be a sequence of mutually different real numbers. Then the mapping $x: \mathcal B(\mathbb R)\to M$ defined by

$$x(E):= \sum\{a_n: t_n \in E\}, \ E \in \mathcal B(\mathbb R). \eqno(3.0)
$$
is an observable on $M.$ In particular, if $t_0=0,$ $t_1=1$ and $a_0=a,$ $a_1=a'$ for some fixed element $a \in M,$  $x$ defined by (3.0) is an observable, called a {\it question} corresponding to the element $a.$

We denote by $\mathcal R(x):=\{x(E): \mathcal B(\mathbb R)\},$ the {\it range} of $x.$ Then the range is not necessarily  an effect subalgebra as well as not an MV-subalgebra of $M.$ For example, let $M=\{0,1/5,2/5,3/5,4/5,1\}=\Gamma(\frac{1}{5}\mathbb Z,1),$ where $\mathbb Z$ is the $\ell$-group of integers, be the MV-algebra. If $a_1 = 1/5,$  $a_2 =4/5$, and $t_1=1,$ $t_2=2,$ then the observable $x$ defined by (3.0) has the range $\mathcal R(x)=\{0,a_1,a_2,1\}$ which is not a subalgebra of $M.$

\begin{lemma}\label{le:2.1}
Let $h$ be an MV-homomorphism from an MV-algebra $M_1$ onto an MV-algebra $M_2.$ If there are two elements $A,B \in M_1,$ $A \le B,$ and an element $c \in M_2$ such that $h(A)\le c \le h(B),$ then there is an element $C \in M_1$ such that $A\le C\le B$ and $h(C) = c.$
\end{lemma}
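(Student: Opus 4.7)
The plan is to use the fact that every MV-algebra is a (distributive) lattice and every MV-homomorphism preserves the lattice operations $\wedge,\vee$. Surjectivity of $h$ gives some preimage $C_0$ of $c$; I then modify $C_0$ by cutting it between $A$ and $B$ using the lattice operations.

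More precisely, since $h$ is onto, pick $C_0 \in M_1$ with $h(C_0)=c$. Define
\[
C := (C_0 \vee A) \wedge B.
\]
The inequality $A \le C$ follows from $A \le C_0 \vee A$ together with the assumption $A \le B$, which gives $A = A \wedge B \le (C_0 \vee A)\wedge B$. The inequality $C \le B$ is immediate. Applying $h$ and using that it is a lattice homomorphism,
\[
h(C) = \bigl(h(C_0)\vee h(A)\bigr)\wedge h(B) = (c\vee h(A))\wedge h(B).
\]
Since $h(A)\le c$, we have $c\vee h(A)=c$; since $c\le h(B)$, we have $c\wedge h(B)=c$. Hence $h(C)=c$.

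I do not expect a real obstacle here: the only facts being used are surjectivity of $h$, the distributive lattice structure of MV-algebras (recalled in Section 2), and that MV-homomorphisms preserve $\vee$ and $\wedge$ (which follows from the definition together with the MV-algebra identities expressing $\vee,\wedge$ in terms of $\oplus$ and ${}^*$). The mild point worth noting in the write-up is the verification of $A\le C$, where one must invoke $A\le B$ rather than just $A\le C_0\vee A$.
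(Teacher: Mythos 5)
Your proof is correct and is essentially the paper's own argument: the paper also picks a preimage $C_1$ of $c$ and sets $C=A\vee(B\wedge C_1)$, which by distributivity and $A\le B$ coincides with your $(C_0\vee A)\wedge B$. Your write-up is in fact slightly more detailed, since the paper omits the verification that $A\le C\le B$ and $h(C)=c$.
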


\begin{proof}  Since $h$ is onto, there is an element $C_1\in M_1$ such that $h(C_1)=c.$ If we set $C = A\vee (B\wedge C_1),$ then $C$ is the element in question.
\end{proof}

\begin{theorem}\label{th:2.2}
Let $x$ be an observable on a $\sigma$-MV-algebra. Given a real number $t \in \mathbb R,$ we define

$$ x_t := x((-\infty, t)). \eqno(3.1)
$$
Then

$$ x_t \le x_s \quad {\rm if} \ t < s, \eqno (3.2)$$

$$\bigwedge_t x_t = 0,\quad \bigvee_t x_t =1, \eqno(3.3)
$$
and
$$ \bigvee_{t<s}x_t = x_s, \ s \in \mathbb R. \eqno(3.4)
$$

Conversely, if there is a system $\{x_t: t \in \mathbb R\}$ of elements of $M$ satisfying {\rm
(3.2)--(3.4)}, then there is a unique observable $x$ on $M$ such that $(3.1)$ holds for any $t \in \mathbb R.$
\end{theorem}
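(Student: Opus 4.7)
The plan is to prove the forward direction (3.2)--(3.4) directly from the definition, then dispose of uniqueness by a monotone-class argument, and finally construct the observable $x$ via the Loomis--Sikorski theorem for $\sigma$-MV-algebras.

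For the forward direction, (3.2) is just monotonicity of $x$ on $\subseteq$. For (3.3), since $\bigcap_n(-\infty,-n)=\emptyset$ and $\bigcup_n(-\infty,n)=\mathbb{R}$, and $x$ respects monotone countable sequences by the properties recorded after the definition of observable, I get $\bigwedge_n x_{-n}=0$ and $\bigvee_n x_n=1$; monotonicity in $t$ then upgrades these to the stated sup/inf over all $t$. For (3.4), pick any $t_n\nearrow s$, so $(-\infty,t_n)\nearrow(-\infty,s)$, and the $\sigma$-continuity of $x$ gives $\bigvee_n x_{t_n}=x_s$, which combined with monotonicity yields $\bigvee_{t<s}x_t=x_s$. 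For uniqueness, if two observables $x,y$ satisfy $x_t=y_t$ for all $t$, then $x([a,b))=x_b-x_a=y_b-y_a=y([a,b))$, and the collection $\{E\in\mathcal{B}(\mathbb{R}): x(E)=y(E)\}$ is closed under complements, countable disjoint unions, and monotone unions, hence equals $\mathcal{B}(\mathbb{R})$ by the standard Dynkin/monotone-class argument performed inside the $\sigma$-complete effect algebra $M$.

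For existence, I invoke the Loomis--Sikorski theorem: there is a tribe $\mathcal{T}$ on some $\Omega$ and a surjective $\sigma$-MV-homomorphism $h:\mathcal{T}\to M$. I first lift $\{x_t\}$ to a pointwise-monotone family $\{X_t\}\subseteq\mathcal{T}$ with $h(X_t)=x_t$. Enumerating the rationals $\{q_n\}$ and applying Lemma \ref{le:2.1} at each step (first on a tribe-element representative $A_n$ for $\sup_{q_k<q_n}X_{q_k}$ and $B_n$ for $\inf_{q_k>q_n}X_{q_k}$), I choose $X_{q_n}$ with $A_n\le X_{q_n}\le B_n$ and $h(X_{q_n})=x_{q_n}$. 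For general $t\in\mathbb{R}$, set $X_t:=\bigvee_{q<t,\,q\in\mathbb{Q}}X_q\in\mathcal{T}$, which by $\sigma$-completeness of $\mathcal{T}$ (pointwise) and of $h$ satisfies $h(X_t)=x_t$ and is monotone and left-continuous in $t$. Next I normalize: replacing $X_t$ by $X_t\ominus\bigwedge_q X_q$ and then by $X_t\vee(\mathbf{1}\ominus\bigvee_q X_q)$ (both innocuous since $h$ kills these corrections by (3.3)), I arrange that for each $\omega\in\Omega$ the function $F_\omega(t):=X_t(\omega)$ is a bona fide distribution function on $\mathbb{R}$.

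For each $\omega$, $F_\omega$ determines a unique Borel probability measure $\mu_\omega$ on $\mathbb{R}$ with $\mu_\omega((-\infty,t))=F_\omega(t)$. Define $X(E)(\omega):=\mu_\omega(E)$ for $E\in\mathcal{B}(\mathbb{R})$; the collection of $E$ for which $\omega\mapsto X(E)(\omega)$ lies in $\mathcal{T}$ is a monotone class containing the half-lines (by construction $X((-\infty,t))=X_t\in\mathcal{T}$) and closed under complements and disjoint unions, hence equals $\mathcal{B}(\mathbb{R})$. Additivity and $\sigma$-continuity of $X$ into $\mathcal{T}$ transfer from the corresponding properties of each $\mu_\omega$, so $X$ is an observable on $\mathcal{T}$, and $x:=h\circ X:\mathcal{B}(\mathbb{R})\to M$ is the required observable, satisfying $x((-\infty,t))=h(X_t)=x_t$. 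Uniqueness was already established, so this $x$ is the only observable with the given restriction. The main obstacle is the lifting step: securing a family $\{X_t\}$ in $\mathcal{T}$ that is genuinely pointwise monotone in $t$ (not merely monotone modulo $\ker h$) and for which $F_\omega$ is a distribution function for every $\omega$; this is exactly where Lemma \ref{le:2.1} and the density of $\mathbb{Q}$ in $\mathbb{R}$, together with the end-point normalization, do the essential work.
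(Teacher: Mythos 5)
Your route is essentially the paper's: the direct part and the uniqueness argument (a Dynkin-system argument on $\{E\in\mathcal B(\mathbb R): x(E)=y(E)\}$) coincide with the paper's, and for existence you, like the paper, pass to a Loomis--Sikorski tribe $\mathcal T$ with a surjective $\sigma$-homomorphism $h$, lift the $x_q$, $q\in\mathbb Q$, to a pointwise monotone family via an inductive use of Lemma \ref{le:2.1}, read off distribution functions $F_\omega$, and run a Dynkin-system argument to get $\xi(E)\in\mathcal T$ for all Borel $E$. The one point where you go beyond the paper is the explicit ``end-point normalization'', and that step, as written, fails. Put $L:=\bigwedge_q X_q$ and $U:=\bigvee_q X_q$ (pointwise). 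Replacing $X_t$ by $X_t\ominus L$ does force $\inf_t X_t(\omega)=0$ for every $\omega$, but the second replacement $X_t\mapsto X_t\vee(1\ominus U)$ destroys this and does not achieve the top: the new family has pointwise infimum $1-U(\omega)$, which is positive exactly on $\{U<1\}$, and pointwise supremum $\max\{U(\omega),1-U(\omega)\}$, which is $<1$ wherever $0<U(\omega)<1$ (using $\oplus$ instead of $\vee$ fixes the top but not the bottom). Since $h(L)=0$ and $h(U)=1$ only say that $L$ and $1-U$ lie in the kernel of $h$, not that they vanish pointwise, the set on which your $F_\omega$ fails to be a distribution function is precisely the set the normalization was meant to cure; indeed no correction of the family by MV/tribe operations with fixed elements can pin both ends at once, as that would amount to the rescaling $(X_t-L)/(U-L)$, which is not available in a tribe. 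So there is a genuine gap at the very step you identify as carrying ``the essential work''.

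For comparison, the paper does not normalize at all: it simply asserts $\lim_{t\to-\infty}F_\omega(t)=0$ and $\lim_{t\to\infty}F_\omega(t)=1$, which is open to the same objection, so you correctly located the delicate point even though your repair does not work. A correct repair must change something other than the lifted functions, for instance: (i) keep $X_t$ (with $X_t:=\bigvee_{q<t}X_q$ for all real $t$, to have left continuity), let $\mu_\omega$ be the Lebesgue--Stieltjes (sub-probability) measure with $\mu_\omega((-\infty,t))=X_t(\omega)-L(\omega)$, and set $P_\omega:=\mu_\omega+\bigl(L(\omega)+1-U(\omega)\bigr)\delta_0$; then $P_\omega$ is a probability measure for every $\omega$, $\xi((-\infty,t))$ equals $X_t\ominus L$ for $t\le 0$ and $(X_t\ominus L)\oplus\bigl(L\oplus(1\ominus U)\bigr)$ for $t>0$, hence lies in $\mathcal T$ and has $h$-image $x_t$, and your Dynkin argument then goes through unchanged; or (ii) use the concrete form of the Loomis--Sikorski tribe, in which a function may be altered on the kernel-supported set $\{L>0\}\cup\{U<1\}$ without leaving $\mathcal T$ or changing its $h$-image, and there replace $P_\omega$ by an arbitrary fixed probability measure. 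With such a patch your proof is complete; without it, the claim that every $F_\omega$ is a bona fide distribution function is unjustified.
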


\begin{proof} Let $x$ be an observable on $M$. Let $\mathbb Q$ be the set of  rational numbers. Due to the density of $\mathbb Q$ in $\mathbb R,$ we have that (3.3) and (3.4) exist in $M,$ and equal  the corresponding suprema and infima taken for $t \in \mathbb Q.$ For example, to prove (3.4), let $\{r_n\}$ be a sequence of rational numbers such that $\{ r_n\}\nearrow s,$ then $\bigvee_n x_{r_n} = x_s.$ We recall that (3.4) holds even if $s\in \mathbb R$ and the left side supremum is taken through all rational numbers under $s.$

Conversely, let  $\{x_t: t \in \mathbb R\}$ be  a system of elements of $M$ satisfying {\rm(3.2)--(3.4)}. Let $r_1,r_2,\ldots$ be any enumeration of the set of rational numbers. Due to the Loomis-Sikorski Theorem,  \cite{Dvu1, Mun}, there are a tribe $\mathcal T$ of  functions from $[0,1]^\Omega$ for some non-void set $\Omega$ and a $\sigma$-MV-homomorphism $h$ from $\mathcal T$ onto $M.$  Given $r_n,$ let $a_n$ be a function from the tribe $\mathcal T$ such that $h(a_n)=x_{r_n}$ for any $n\ge 1.$ We are claiming that it is possible to find such a sequence of functions $\{b_n\}$ from $\mathcal T$ such that $
h(b_n)=x_{r_n}$ for any $n\ge 1$ and $b_n \le b_m$ whenever $r_n < r_m.$ Indeed, if $n=1$, we set $b_1 = a_1.$ By mathematical induction suppose that we have find $b_1,\ldots,b_n$ such that $h(b_i)=x_{r_i},$ and $b_i \le b_j$ whenever $r_i < r_j$ for $i,j=1,\ldots, n.$ Let $j_1,\ldots,j_n$ be a permutation of $1,\ldots,n$ such that $r_{j_1}<\cdots<r_{j_n}.$  For $r_{n+1}$ we have three possibilities (i) $r_{n+1}< r_{j_1},$ (ii) there exists $k =1,\ldots,n-1$ such that $r_{j_k} < r_{n+1} < r_{j_{k+1}},$ or (iii) $r_{j_n} < r_{n+1}.$  Applying Lemma \ref{le:2.1}, we can find $b_{n+1}\in \mathcal T,$ $h(b_{n+1})=r_{n+1},$ such that for all $i,j =1,\ldots, n+1,$ $b_i \le b_j$ whenever $r_i < r_j.$

Thus, we can assume that the sequence of functions $\{b_{r_n}\},$ where $b_{r_n}:= b_n,$ for $n \ge 1,$  is linearly ordered. Due to the density of rational numbers in $\mathbb R,$ for any real number $t\in \mathbb R$ we can find a function $b_t \in \mathcal T$ such that $h(b_t)=x_t.$  Indeed, if $\{p_n\} \nearrow t$ and $\{q_n\}\nearrow t$ for two sequences of rational numbers, $\{p_n\}$ and $\{q_n\},$ we can show that $\bigvee_n b_{p_n}=\bigvee_n b_{q_n}.$  Hence, $b_t := \bigvee_n b_{r_n}$ is a well-defined element of $\mathcal T$ satisfying $h(b_t)=r_t.$  In addition, the system of functions $\{b_t: t \in \mathbb R\}$ is also linearly ordered, and $b_t \le b_s$ if $t <s.$

Let $\omega \in \Omega$ be a fixed element. We define $F_\omega(t):= b_t(\omega),$ $t \in \mathbb R.$  Due to the above proved arguments, we see that $F_\omega$ is a nondecreasing, left continuous function, such that $\lim_{t \to -\infty} F_\omega(t)=0$  and $\lim_{t \to \infty} F_\omega(t)=1.$ By \cite[Thm 43.2]{Hal}, $F_\omega$ is a distribution function on $\mathbb R$ corresponding to a unique probability measure $P_\omega$ on $\mathcal B(\mathbb R),$ that is $P_\omega((-\infty,t))=F_\omega(t)$ for every $t \in \mathbb R.$ Define now a mapping $\xi: \mathcal B(\mathbb R) \to [0,1]^\Omega$ by $\xi(E)(\omega)=P_\omega(E),$ $E \in \mathcal B(\mathbb R)$, $\omega \in \Omega.$  In particular, we have $\xi((-\infty,t)) = b_t\in \mathcal T$ for any $t \in \mathbb R.$  To prove that every $\xi(E) \in \mathcal T$ for any $E \in \mathcal B(\mathbb R),$ let $\mathcal K$ be the system of all $E \in \mathcal B(\mathbb R)$ such that $\xi(E) \in \mathcal T.$ Then $\mathcal K$ is a Dynkin system, i.e. a system of subsets containing its universe, which is closed under the set theoretical complements and countable unions of of disjoint subsets. The system $\mathcal K$ contains all intervals $(-\infty, t)$ for $t \in \mathbb R,$ all intervals of the form $[a,b),$ $a\le b$, as well as all finite unions of such  disjoint intervals $\bigcup_{i=1}^n [a_i,b_i).$  Because any finite union of intervals $\bigcup_{j=1}^m [c_j,d_j)$ can be expressed as a finite union of disjoint intervals, $\mathcal K$ contains also such unions. Therefore, if $E$ and $F$ are two finite unions of intervals, so is its intersection. Hence, by \cite[Thm 2.1.10]{Dvu}, $\mathcal K$ is also a $\sigma$-algebra, and finally we have $\mathcal K = \mathcal B(\mathbb R).$

Therefore, $\xi$ is an MV-observable on $\mathcal T$ and $x:=h\circ \xi$ is an observable on $M$ such that $x((-\infty, t)) = x_t$ for any $t \in \mathbb R.$

Finally, we prove the uniqueness of $x.$  Assume, that $y$ is any observable on $M$ such that $y((-\infty,t))=x_t,$ $t \in \mathbb R.$  Let $\mathcal H$ be the system of Borel sets $E \in \mathcal B(\mathbb R)$ such that $x(E) = y(E).$ In a similar way as for $\mathcal K$, we show that $\mathcal H = \mathcal B(\mathbb R).$
\end{proof}

\begin{remark}\label{rm:2.3}
If $x$ is an observable on a monotone $\sigma$-complete effect algebra $E,$ then the first part of Theorem {\rm \ref{th:2.2}} holds.
\end{remark}

\begin{remark}\label{rm:2.4}
Theorem {\rm \ref{th:2.2}} holds also if we have a system $\{x_t: t \in \mathbb Q\}$ satisfying $(3.2)-(3.4).$
\end{remark}

Let $E$ be an effect algebra. We say that two elements $a,b \in E$ are {\it compatible}, and we write $\leftrightarrow b$,  if there are three elements $a_1, b_1,c$ such that $a= a_1+c,$ $b = b_1 +c$ and $a_1+b_1+c$ is defined in $E.$ For example, every two elements of an MV-algebra are compatible. A {\it block} is any maximal system of mutually compatible elements of $E.$ For example, if $x$ is any observable, then $x(E) \leftrightarrow x(F)$ for all $E,F \in \mathcal B(\mathbb R).$  Indeed, $x(E\cup F) = x(E\setminus F) + x(E\cap F) + x(F\setminus E).$ If $a\le b,$ then $a \leftrightarrow b$~:  $a = 0 + a$ and $b = (b-a)+a.$   Due \cite{Rie} or \cite[Thm 1.10.20]{DvPu}, if $E$ is a lattice ordered effect algebra, then each block is in fact an MV-subalgebra of $E$, and $E$ is a union of blocks. In addition, if $E$ is a $\sigma$-lattice, each  its block is a $\sigma$-MV-subalgebra of $E.$

Now we are able to extend Theorem \ref{th:2.2} for $\sigma$-complete lattice ordered effect algebras.

\begin{theorem}\label{th:2.5}  Let $E$ be a $\sigma$-lattice effect algebra. If there is a system $\{x_t: t \in \mathbb R\}$ of elements of $E$ satisfying {\rm
(3.2)--(3.4)}, then there is a unique observable $x$ on $E$ such that $(3.1)$ holds for any $t \in \mathbb R.$

\end{theorem}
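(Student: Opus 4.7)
The plan is to reduce Theorem \ref{th:2.5} to Theorem \ref{th:2.2} by locating a single $\sigma$-MV-subalgebra of $E$ that contains the entire system $\{x_t: t \in \mathbb R\}$, and then to apply Theorem \ref{th:2.2} inside that subalgebra. The starting observation is that (3.2) makes $\{x_t: t \in \mathbb R\}$ a chain in $E$, and any two comparable elements of an effect algebra are compatible: from $x_s \le x_t$ one has $x_t = (x_t - x_s) + x_s$, so the triple $0$, $x_t - x_s$, $x_s$ witnesses $x_s \leftrightarrow x_t$. Hence $\{x_t: t \in \mathbb R\}$ is a family of mutually compatible elements of $E$.

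I would then extend this chain by Zorn's lemma to a maximal system of mutually compatible elements, i.e. to a block $B$ of $E$. By the block decomposition quoted in the paragraph preceding Theorem \ref{th:2.5}, every block of the $\sigma$-lattice effect algebra $E$ is a $\sigma$-MV-subalgebra of $E$. In particular, the countable suprema and infima appearing in (3.3) and (3.4), computed in $E$, already lie in $B$ and agree with the corresponding suprema and infima computed inside $B$. Consequently the system $\{x_t\}$ still satisfies (3.2)-(3.4) when read inside the $\sigma$-MV-algebra $B$.

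Now Theorem \ref{th:2.2} applied to $B$ delivers an observable $x^B: \mathcal B(\mathbb R) \to B$ such that $x^B((-\infty,t)) = x_t$ for every $t \in \mathbb R$. Composing $x^B$ with the inclusion $B \hookrightarrow E$, which preserves the partial addition and monotone countable suprema since $B$ is a $\sigma$-MV-subalgebra, yields an observable $x$ on $E$ satisfying (3.1). For uniqueness on the whole of $E$, the half-lines $\{(-\infty,t): t \in \mathbb R\}$ form a $\pi$-system generating $\mathcal B(\mathbb R)$, while the set $\mathcal H := \{F \in \mathcal B(\mathbb R): x(F) = y(F)\}$ is a Dynkin system: it contains $\mathbb R$, is closed under complements (because $z(F^c) = z(F)'$ for any observable $z$), and is closed under countable disjoint unions by the $\sigma$-additivity of observables. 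A Dynkin-system argument, exactly as at the end of the proof of Theorem \ref{th:2.2}, then forces $\mathcal H = \mathcal B(\mathbb R)$.

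The hard step is the middle one: I must be sure that $B$ is not merely an MV-subalgebra of $E$ but genuinely a $\sigma$-MV-subalgebra, so that the suprema $\bigvee_{t<s} x_t$ in (3.4) and the countable suprema/infima behind (3.3), computed in $E$, actually lie in $B$ and coincide with the corresponding suprema and infima taken internally in $B$. This is precisely what the $\sigma$-version of the block decomposition for lattice effect algebras quoted just before Theorem \ref{th:2.5} provides, and it is what makes the reduction to Theorem \ref{th:2.2} go through.
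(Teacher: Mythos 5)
Your proposal is correct and follows essentially the same route as the paper's own proof: the linearly ordered system $\{x_t\}$ consists of mutually compatible elements, hence lies in a block, which for a $\sigma$-lattice effect algebra is a $\sigma$-MV-subalgebra of $E$, and Theorem \ref{th:2.2} applied inside that block yields the observable. You merely spell out details the paper leaves implicit (Zorn's lemma for the block and the Dynkin-system argument for uniqueness among all observables on $E$), which is a harmless and indeed welcome elaboration.
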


\begin{proof} Since the system $\{x_t: t \in \mathbb R\}$ is linearly ordered,  we have $x_t \leftrightarrow x_s$ for all $t,s \in \mathbb R,$ and we see that there is a block $M$ of $E$ containing all $x_t$'s. This block is in fact a $\sigma$-MV-algebra.  The desired result follows now from Theorem \ref{th:2.2} for the $\sigma$-MV-algebra $M.$
\end{proof}

Now if $M$ is a Boolean $\sigma$-algebra, Theorem \ref{th:2.2} and Theorem \ref{th:2.5} can be applied for any system $\{x_t: t \in \mathbb R\}$  satisfying (3.3)-(3.4). In the following result, we give another proof to derive the corresponding observable $x$ satisfying (3.2)-(3.4). We are inspired by the proof of \cite[Thm 1.4]{Var}.

\begin{theorem}\label{th:2.6}
Let $M$ be a Boolean  $\sigma$-algebra. If there is a system $\{x_t: t \in \mathbb R\}$ of elements of $M$ satisfying {\rm
(3.2)--(3.4)}, then there is a unique observable $x$ on $M$  which is a $\sigma$-homomorphism of Boolean $\sigma$-algebras  such that $(3.1)$ holds for any $t \in \mathbb R.$
\end{theorem}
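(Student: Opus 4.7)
The plan is to construct $x$ directly, exploiting the full Boolean structure of $M$ rather than passing through Loomis-Sikorski as in Theorem \ref{th:2.2}. Let $\mathcal{A}\subseteq\mathcal{B}(\mathbb{R})$ be the Boolean subalgebra generated by the half-lines $(-\infty,t)$; concretely, $\mathcal{A}$ consists of all finite disjoint unions of half-open intervals $[a,b)$ with $-\infty\le a\le b\le+\infty$. For such an interval I set
$$
x\bigl([a,b)\bigr) := x_b\wedge x_a',
$$
with the conventions $x_{-\infty}:=0$ and $x_{+\infty}:=1$ justified by (3.3); by (3.2) this is precisely the Boolean difference $x_b-x_a$. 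Extending additively via $x\bigl(\bigsqcup_{i=1}^n[a_i,b_i)\bigr):=\bigvee_{i=1}^n x([a_i,b_i))$ (a disjoint join in $M$) gives a candidate map $x:\mathcal{A}\to M$. Well-definedness and the Boolean-homomorphism property on $\mathcal{A}$ are then routine using the distributivity of $M$ and common refinement of endpoint partitions.

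The crux is the $\sigma$-extension to $\mathcal{B}(\mathbb{R})=\sigma(\mathcal{A})$. By the standard Boolean version of the Carath\'eodory--Hopf extension theorem (applicable since $M$ is a Boolean $\sigma$-algebra), it suffices to show that $x$ is $\sigma$-additive on $\mathcal{A}$: whenever $E_1,E_2,\ldots\in\mathcal{A}$ are pairwise disjoint with $E:=\bigsqcup_i E_i\in\mathcal{A}$, one has $x(E)=\bigvee_i x(E_i)$. After splitting each $E_i$ into its constituent half-open intervals and reindexing, this reduces to the case where $\{[a_i,b_i)\}_{i\ge1}$ is a countable disjoint partition of a single interval $[a,b)$, and the claim becomes
$$
\bigvee_{i\ge1}\bigl(x_{b_i}\wedge x_{a_i}'\bigr)=x_b\wedge x_a'.
$$
This equality follows from a direct telescoping computation in the distributive lattice $M$, after arranging the intervals from left to right so that $b_i=a_{i+1}$, combined with the left-continuity (3.4) and the limits (3.3) of the spectral family. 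I expect this step to be the main technical obstacle, as one must also justify the reduction from arbitrary disjoint sequences in $\mathcal{A}$ to interval partitions.

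Finally, uniqueness follows exactly as in the proof of Theorem \ref{th:2.2}: if $y$ is any other observable satisfying (3.1), then $\mathcal{H}:=\{E\in\mathcal{B}(\mathbb{R}):x(E)=y(E)\}$ is a Dynkin system containing the family $\mathcal{A}$ (which is closed under finite intersection), and hence $\mathcal{H}=\mathcal{B}(\mathbb{R})$.
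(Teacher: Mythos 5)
Your construction on the interval algebra $\mathcal{A}$ and the Dynkin-system uniqueness argument are fine, but the two steps you lean on hardest both have real holes. First, the $\sigma$-additivity computation: a countable disjoint family of half-open intervals with union $[a,b)$ cannot in general be ``arranged from left to right so that $b_i=a_{i+1}$''; ordered by position, such a partition is a well-ordered family whose order type can be any countable ordinal (already $[0,2)=\bigcup_n[1-2^{-n},1-2^{-n-1})\cup[1,2)$ has type $\omega+1$), so plain telescoping along an $\omega$-sequence does not reach all the pieces. The identity $\bigvee_i(x_{b_i}\wedge x_{a_i}')=x_b\wedge x_a'$ is still provable, but it needs a transfinite induction on that countable order type, with finite additivity at successor stages and, at limit stages, the countable distributive law $c\wedge\bigvee_n d_n=\bigvee_n(c\wedge d_n)$ valid in a Boolean $\sigma$-algebra together with (3.3)--(3.4); none of this is in your sketch, and you yourself flag the step as the main obstacle without carrying it out.

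Second, and more seriously, the ``standard Boolean version of the Carath\'eodory--Hopf extension theorem'' is not a routine transcription of the classical one: the Carath\'eodory outer-measure construction takes an infimum over the (uncountable) family of all countable covers, which requires the target to be complete, not merely $\sigma$-complete. What you actually need is Sikorski's theorem on extending a homomorphism of a field of sets that is $\sigma$-additive within the field to a $\sigma$-homomorphism of the generated $\sigma$-field with values in a Boolean $\sigma$-algebra. That theorem is true, but its standard proof goes through the Loomis--Sikorski representation of $M$ --- exactly the device you set out to avoid --- and in the countably generated situation at hand the natural proof of it is precisely the paper's argument: represent $M$ as a $\sigma$-homomorphic image $h$ of a $\sigma$-field $\mathcal S$ of sets, lift the rational spectral values $x_r$ to a monotone family $A_r\in\mathcal S$, define $f(\omega)=\inf\{r:\omega\in A_r\}$, and set $x=h\circ f^{-1}$. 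So as written your proof either rests on an unproved and uncited extension theorem, or, once that theorem is unwound, collapses back into the paper's Loomis--Sikorski proof; to make your route self-contained you must both repair the transfinite-induction step and prove (or properly cite) the Boolean extension theorem.
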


\begin{proof} Due to the classical Loomis-Sikorski Theorem, \cite[Thm 29.1]{Sik}, there are a $\sigma$-algebra $\mathcal S$ of subsets of a set $\Omega \ne \emptyset$ and a $\sigma$-homomorphism of Boolean $\sigma$-algebras $h$ from $\mathcal S$ onto $M.$ Let $r_1,r_2,\ldots$ be any enumeration of the set of rational numbers, $\mathbb Q.$ Similarly as in the proof of Theorem \ref{th:2.2}, we can find a sequence $\{A_n\}$ of elements from $\mathcal S$ such that $h(A_n) = x_{r_n}$ and $A_n \subseteq A_m$ whenever $r_n < r_m.$ If we set $A = \bigcap_n A_n,$ then $h(A)=\bigwedge_n h(A_n)=0.$  Now we define a mapping $f:\Omega \to \mathbb R$ by

$$
f(\omega)= \left\{\begin{array}{ll} \inf\{r_j: \omega \in A_j\} & \mbox{if} \ \omega \in \bigcup_n A_n,\\
0 & \mbox{if}\ \omega \notin \bigcup_n A_n.
\end{array}
\right.
$$

Then $f$ is a well-defined finite function on $\Omega.$  Moreover,
$$
f^{-1}((-\infty ,r_k)) = \left\{
\begin{array}{ll}
   \bigcup _{i:r_i < r_k}A_i &\mbox{if}\ r_k \leq 0,\\
  \bigcup _{i:r_i < r_k}A_i \cup \bigl(\Omega \smallsetminus
   \bigcup _{n=1}^{\infty} A_n\bigr) \quad &\mbox {if $r_k > 0$,}
       \end{array}
\right.
$$
hence, $f$ is $\mathcal S$-measurable. Therefore, $x:=h\circ f^{-1}$ is an observable on $M$ which is a $\sigma$-homomorphism of Boolean $\sigma$-algebras such that $x((-\infty,r_k))=h(f^{-1}((-\infty,r_k))) = x_{r_k}$ for any $k\ge 1.$ Consequently,  $x((-\infty, t))=h(f^{-1}((-\infty,t))) = x_t$ for each $t \in \mathbb R.$

Assume that $y:\mathcal B(\mathbb R) \to M$ is an arbitrary $\sigma$-homomorphism of Boolean $\sigma$-algebras such $y((-\infty,t))=x_t$ for each $t \in \mathbb R.$ Let $\mathcal K$ be the system of Borel sets $E \in \mathcal B(\mathbb R)$ such that $x(E)=y(E).$  Since both $x$ and $y$ are $\sigma$-homomorphisms of Boolean $\sigma$-algebras, trivially $\mathcal K$ is a $\sigma$-algebra containing generators $\{(-\infty, t): t \in \mathbb R\},$ and we have $\mathcal K= \mathcal B(\mathbb R).$ Hence, $x=y.$
\end{proof}

Comparing Theorem \ref{th:2.6} with Theorem \ref{th:2.2} and Theorem \ref{th:2.5}, we have the following corollary.

\begin{corollary}\label{co:2.7}
Every observable on a Boolean $\sigma$-algebra $M$ is a $\sigma$-homomorphism of Boolean $\sigma$-algebras.
\end{corollary}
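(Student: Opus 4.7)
The plan is a two-line uniqueness argument playing Theorem \ref{th:2.2} off against Theorem \ref{th:2.6}. Let $x\colon\mathcal B(\mathbb R)\to M$ be an arbitrary observable on the Boolean $\sigma$-algebra $M$, and set $x_t:=x((-\infty,t))$ for $t\in\mathbb R$. Since $M$ is a Boolean $\sigma$-algebra it is, in particular, a $\sigma$-MV-algebra, so the first (direct) part of Theorem \ref{th:2.2} applies and yields that the family $\{x_t:t\in\mathbb R\}$ satisfies (3.2)--(3.4).

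Next, I would invoke Theorem \ref{th:2.6}: for this system $\{x_t\}$ there exists an observable $y\colon\mathcal B(\mathbb R)\to M$ which is a $\sigma$-homomorphism of Boolean $\sigma$-algebras and satisfies $y((-\infty,t))=x_t$ for every $t\in\mathbb R$. Now $x$ and $y$ are both observables on the $\sigma$-MV-algebra $M$ with the same values on the left half-lines $(-\infty,t)$, so the uniqueness clause of Theorem \ref{th:2.2} forces $x=y$. Hence $x$ itself is a $\sigma$-homomorphism of Boolean $\sigma$-algebras, which is the claim.

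There is essentially no obstacle; the only point to double-check is that the uniqueness in Theorem \ref{th:2.2} is uniqueness among \emph{all} observables (equivalently, $\sigma$-homomorphisms of effect algebras), not merely among $\sigma$-homomorphisms of Boolean $\sigma$-algebras—this is exactly what allows us to conclude that the \emph{a priori} weaker map $x$ must coincide with the Boolean-homomorphism $y$ produced by Theorem \ref{th:2.6}. Since every Boolean $\sigma$-algebra is a $\sigma$-MV-algebra, the hypotheses of Theorem \ref{th:2.2} are indeed satisfied, and the argument goes through.
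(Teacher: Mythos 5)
Your proposal is correct and follows essentially the same route as the paper: obtain the Boolean $\sigma$-homomorphism $y$ with $y((-\infty,t))=x_t$ from Theorem \ref{th:2.6}, then use the uniqueness of observables in Theorem \ref{th:2.2} to force $x=y$. Your explicit check that $\{x_t\}$ satisfies (3.2)--(3.4) and your remark that the uniqueness is among all observables are exactly the points the paper's argument relies on.
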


\begin{proof}
Let $x$ be an observable on $M,$  and define the system $\{x_t: t\in \mathbb R\},$ where $x_t:= x((-\infty,t)).$ By Theorem \ref{th:2.6} there is a unique  $\sigma$-homomorphism $y$ of Boolean $\sigma$-algebras, $y: \mathcal B(\mathbb R)\to M,$ such that $y((-\infty,t)) =x_t$ for every $t \in \mathbb R.$ On the other hand, because $y$ is also naturally an observable on $M$, by Theorem \ref{th:2.2} or by Theorem \ref{th:2.5}, $y$ and $x$ are  unique observables on $M$ such that $x((-\infty,t))=x_t = y((-\infty,t))$ for every $t \in \mathbb R$ which entails $x=y$ and therefore, $x$ is also a $\sigma$-homomorphism of Boolean $\sigma$-algebras.
\end{proof}

Theorem \ref{th:2.6} and Corollary \ref{co:2.7} can be used also for characterizations of observables of quantum logics, which are orthomodular $\sigma$-lattices of $\sigma$-orthocomplete orthomodular posets. We recall, \cite{DvPu}, that  an {\it orthocomplementation} on a poset $L$ is a mapping $a \mapsto a^\bot$ such that for all $a,b\in L$, we have (i) $a^{\bot\bot}=a$ (ii) $b^\bot \le a^\bot$ if $a \le b,$ if $a \le b^\bot$, $a\vee b \in L,$ (iii) $a \vee a^\bot = 1.$ A poset $L$ is {\it orthomodular} if $a\le b$ implies $b = a\vee (a \vee b^\bot)^\bot$ $ (=  a \vee (b \wedge a^\bot)).$ Hence, an orthomodular poset is a special example of an effect algebra, when $a+b$ means that $a\le b^\bot$ and $a+b:= a\vee b.$  An orthomodular poset $L$ is a $\sigma$-{\it orthocomplete orthomodular poset} if, for any sequence $\{a_n\}$ such that $a_n \le a_m^\bot$ for $n\ne m,$  $\bigvee_n a_n$ exists in $L.$ Then $L$ is also a monotone $\sigma$-complete effect algebra but not necessarily with RDP, because RDP holds in our $L$ iff $L$ is a Boolean $\sigma$-algebra.

\begin{theorem}\label{th:2.8}
Let $L$ be a  $\sigma$-orthocomplete orthomodular poset. If there is a system $\{x_t: t \in \mathbb R\}$ of elements of $M$ satisfying {\rm
(3.2)--(3.4)}, then there is a unique observable $x$ on $L$  which is a $\sigma$-homomorphism of Boolean $\sigma$-algebras  such that $(3.1)$ holds for any $t \in \mathbb R.$ In addition, every observable on $L$ is a $\sigma$-homomorphism of Boolean algebras.
\end{theorem}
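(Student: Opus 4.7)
My plan is to reduce Theorem \ref{th:2.8} to Theorem \ref{th:2.6} by localising everything inside a Boolean sub-$\sigma$-algebra of $L$. The key observation is that by (3.2) the chain $\{x_t: t\in \mathbb R\}$ consists of pairwise compatible elements, and in any orthomodular poset a mutually compatible subset is contained in a block, which is a Boolean subalgebra. Using the $\sigma$-orthocompleteness of $L$ one checks that the $\sigma$-subalgebra $B$ of $L$ generated by $\{x_t: t\in \mathbb R\}$ is a Boolean $\sigma$-algebra, whose countable Boolean operations agree with the orthogonal sums actually taken in $L$; this uses the standard refinement trick, namely that a countable supremum of a pairwise compatible family can always be rewritten as the orthogonal sum of the successive differences, which exists by $\sigma$-orthocompleteness.

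With $B$ in hand, Theorem \ref{th:2.6} applied to the Boolean $\sigma$-algebra $B$ and the system $\{x_t\}\subseteq B$ yields a unique $\sigma$-homomorphism of Boolean $\sigma$-algebras $x:\mathcal B(\mathbb R)\to B$ with $x((-\infty,t))=x_t$ for every $t\in\mathbb R$. Composing with the inclusion $B\hookrightarrow L$ turns $x$ into an observable on $L$, still a $\sigma$-homomorphism of Boolean $\sigma$-algebras. For uniqueness, suppose $y$ is another observable on $L$ with $y((-\infty,t))=x_t$. As noted just before Theorem \ref{th:2.5}, all values of $y$ are pairwise compatible, so the range of $y$ again lies in a Boolean sub-$\sigma$-algebra of $L$. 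Since $\{(-\infty,t): t\in\mathbb R\}$ generates $\mathcal B(\mathbb R)$, this sub-$\sigma$-algebra necessarily contains $B$, and on this common Boolean $\sigma$-algebra both $x$ and $y$ are $\sigma$-homomorphisms of Boolean $\sigma$-algebras agreeing on the generators $(-\infty,t)$; the uniqueness clause of Theorem \ref{th:2.6} then gives $y=x$, which finishes the first half of the theorem.

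For the concluding sentence, let $x$ be an arbitrary observable on $L$ and set $x_t:=x((-\infty,t))$. By Remark \ref{rm:2.3} (which applies since every $\sigma$-orthocomplete orthomodular poset is monotone $\sigma$-complete) the family $\{x_t\}$ satisfies (3.2)--(3.4). The first half of the theorem, just proved, produces a $\sigma$-homomorphism of Boolean $\sigma$-algebras $x'$ with $x'((-\infty,t))=x_t$; since $x'$ is in particular an observable on $L$ with the same values on the half-lines as $x$, the uniqueness clause of the first half forces $x=x'$, whence $x$ itself is a $\sigma$-homomorphism of Boolean $\sigma$-algebras. The main obstacle I expect is the very first step, namely verifying rigorously that the $\sigma$-subalgebra $B$ generated by the chain is a Boolean $\sigma$-algebra whose countable lattice operations coincide with those performed in $L$; this is exactly where $\sigma$-orthocompleteness and the compatibility-refinement technique do the substantive work, and one must take care that the argument does not tacitly require $L$ to be a $\sigma$-lattice.
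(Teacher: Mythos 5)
Your overall strategy --- place the chain $\{x_t\}$ inside a Boolean sub-$\sigma$-algebra of $L$ and then invoke Theorem \ref{th:2.6} together with Corollary \ref{co:2.7} --- is exactly the paper's. The genuine gap is in how you justify that such a Boolean sub-$\sigma$-algebra exists. Your stated reason, that ``in any orthomodular poset a mutually compatible subset is contained in a block, which is a Boolean subalgebra,'' is false for orthomodular posets: that is an orthomodular-\emph{lattice} fact, and in a mere orthomodular poset pairwise compatibility does not imply joint compatibility, nor need a block be a (Boolean) subalgebra. What saves the argument here is the special feature that $\{x_t\}$ is a chain: any finite subfamily $x_{t_1}\le\cdots\le x_{t_n}$ has the orthogonal refinement $x_{t_1},\,x_{t_2}-x_{t_1},\,\dots,\,1-x_{t_n}$, and $\sigma$-orthocompleteness then allows one to build a Boolean $\sigma$-subalgebra whose countable operations are the ones computed in $L$. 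You correctly identify this verification as ``the main obstacle,'' but you leave it undone; the paper disposes of it by citing \cite[Thm 2.1.10]{Dvu}, which supplies precisely the statement that a block containing the chain is a Boolean $\sigma$-algebra in a $\sigma$-orthocomplete orthomodular poset. So the central step of your proof is missing rather than wrong in spirit.

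The same unsupported compatibility claim reappears in your uniqueness argument, where you assert that the range of an arbitrary observable $y$ lies in a Boolean sub-$\sigma$-algebra merely because its values are pairwise compatible, and moreover that this subalgebra contains your $B$; neither is automatic in an orthomodular poset. The detour is also unnecessary: for any two observables $x,y$ on a monotone $\sigma$-complete effect algebra the class $\{E\in\mathcal B(\mathbb R): x(E)=y(E)\}$ is a Dynkin ($\lambda$-) system, and the half-lines $(-\infty,t)$ form a $\pi$-system generating $\mathcal B(\mathbb R)$, so $x=y$ follows directly; this is the argument the paper uses at the end of the proof of Theorem \ref{th:2.2} and tacitly reuses here. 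With these two repairs --- the chain-refinement/block fact and the $\pi$--$\lambda$ uniqueness --- your reduction to Theorem \ref{th:2.6}, the appeal to Corollary \ref{co:2.7}, and your derivation of the final clause all coincide with the paper's proof.
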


\begin{proof} Let $\{x_t: t\in \mathbb R\}$ be a given system of elements of $L$ satisfying (3.2)-(3.4). It is clear that $x_s \leftrightarrow x_t$ for all $t,s \in \mathbb R.$ By \cite[Thm 2.1.10]{Dvu}, a block, say $M,$ containing $\{x_t: t\in \mathbb R\}$ is a Boolean $\sigma$-algebra. Therefore, $\{x_t: t \in \mathbb R\}$ is in fact in the Boolean $\sigma$-algebra $M$ and we can apply Theorem \ref{th:2.6} and Corollary \ref{co:2.7} to obtain  the desired results.
\end{proof}


To extend Theorem \ref{th:2.2} also for monotone $\sigma$-complete effect algebras, we introduce the following notions.

We say that a {\it state} on an effect algebra $M$ is any mapping $s: M \to [0,1]$ such that (i) $s(1) =1$ and (ii) $s(a+b) = s(a) + s(b)$ whenever $a+b$ is defined in $M.$  We denote by ${\mathcal S}(M)$ the
set of all states on $M$. It can happen that ${\mathcal S}(M)$ is empty,
see e.g. \cite[Ex 4.2.4]{DvPu}. In important cases, for example when
$M$ satisfies RDP, ${\mathcal S}(M)$ is nonempty, see \cite{Rav} and \cite[Cor. 4.4]{Goo}.  In particular, every MV-algebra admits a state. We recall that  ${\mathcal S}(M)$ is always a convex set. A state $s$ is said to be {\it extremal} if $s = \lambda s_1 +
(1-\lambda)s_2$ for $\lambda \in (0,1)$ implies $s = s_1 = s_2.$
We denote by $\partial_e \mathcal S(M)$  the set of all extremal states of
${\mathcal S}(M)$. We say that a net of states, $\{s_\alpha\}$, on $M$
{\it weakly converges} to a state $s$ on $M$ if $s_\alpha(a) \to
s(a)$ for any $a \in M$. In this topology,  ${\mathcal S}(M)$ is a
compact Hausdorff topological space and every state on $M$ lies in
the weak closure of the convex hull of the extremal states as it
follows from the Krein-Mil'man theorem. Hence, $\mathcal S(M)$ is empty iff so is $\partial_e \mathcal S(M).$

If $\mathcal S(M)$ is non-void, given an element $a \in M,$ we define a function $\hat a: \mathcal S(M) \to [0,1]$ by
$$
\hat a(s):= s(a), \ s \in \mathcal S(M).
$$
Then $\hat a$ is a continuous affine function on $\mathcal S(M).$

If $M$ is an MV-algebra, $\partial_e \mathcal S(M)$ is always a compact set. In general, this is not true if $M$ is an effect algebra. However,  a fine  result of Choquet \cite[page 49]{Alf}  says that  the set of extremal states  is
always a Baire space, i.e. the Baire Category Theorem holds for $\partial_e {\mathcal
S}(M).$

Let $f$ be a real-valued function on ${\mathcal S}(M).$  We define
$$
N(f) :=\{s \in \partial_e{\mathcal S}(M) :\, f(s) \ne 0\}.
$$

Now we formulate an extension of Theorem \ref{th:2.2} and Theorem \ref{th:2.5} for monotone $\sigma$-complete effect algebras with RDP. We recall that if $M$ is a monotone $\sigma$-complete effect algebra with RDP, then $M$ is not necessarily a lattice, see \cite[Ex. 16.8]{Goo}.

\begin{theorem}\label{th:2.9}
Let $M$ be a monotone $\sigma$-complete effect algebra satisfying RDP. If there is a system $\{x_t: t \in \mathbb R\}$ of elements of $M$ satisfying {\rm
(3.2)--(3.4)}, then there is a unique observable $x$ on $M$ such that $(3.1)$ holds for any $t \in \mathbb R.$
\end{theorem}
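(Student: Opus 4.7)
The strategy is to imitate the proof of Theorem \ref{th:2.2} step by step, replacing the MV-algebraic Loomis-Sikorski theorem by its effect-algebraic counterpart from \cite{Dvu1}: since $M$ is monotone $\sigma$-complete with RDP, there is an effect-tribe $\mathcal T \subseteq [0,1]^\Omega$ on some nonempty $\Omega$ together with a surjective $\sigma$-homomorphism of effect algebras $h : \mathcal T \to M$. The rest of the argument transports $\{x_t\}$ across $h$ into $\mathcal T$, glues the lifts into a distribution function at each point of $\Omega$, and pushes the resulting probability measures back down through $h$.

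First, I enumerate the rationals as $r_1, r_2, \ldots$ and choose, for each $n$, any preimage $a_n \in \mathcal T$ of $x_{r_n}$ under $h$. The next and most delicate step is to replace the $a_n$ by a family $\{b_n\} \subseteq \mathcal T$ with $h(b_n) = x_{r_n}$ and $b_n \le b_m$ (pointwise) whenever $r_n < r_m$. This is done by induction on $n$, exactly as in the proof of Theorem \ref{th:2.2}, but it requires the following analog of Lemma \ref{le:2.1} inside the effect-tribe: whenever $A \le B$ in $\mathcal T$ and $c \in M$ satisfies $h(A) \le c \le h(B)$, there exists $C \in \mathcal T$ with $A \le C \le B$ and $h(C) = c$. \emph{This interval-lifting property is the main obstacle}, since $\mathcal T$ need not be a lattice and the formula $C = A \vee (B \wedge C_1)$ from Lemma \ref{le:2.1} is unavailable. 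The substitute has to be extracted from RDP in $M$ together with pointwise truncation in the effect-tribe: any preimage $C_1 \in \mathcal T$ of $c$ is modified pointwise into $[A,B]$, and one uses RDP to check that the truncated function still lies in $\mathcal T$.

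Once the monotone family $\{b_{r_n}\}$ is in place, I extend it to all of $\mathbb R$ by $b_t := \bigvee_n b_{r_n}$ taken over any sequence of rationals $r_n \nearrow t$. Independence of the approximating sequence and the identity $h(b_t) = x_t$ both follow from (3.4) and from the monotone $\sigma$-completeness of $\mathcal T$. For each fixed $\omega \in \Omega$ the real function $F_\omega(t) := b_t(\omega)$ is then nondecreasing and left-continuous with $\lim_{t \to -\infty} F_\omega(t) = 0$ and $\lim_{t \to +\infty} F_\omega(t) = 1$ thanks to (3.2)--(3.4), hence the distribution function of a unique probability measure $P_\omega$ on $\mathcal B(\mathbb R)$ by \cite[Thm 43.2]{Hal}.

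Finally, I put $\xi(E)(\omega) := P_\omega(E)$ and show $\xi(E) \in \mathcal T$ for every Borel $E$ by the Dynkin system argument used in Theorem \ref{th:2.2}: the class $\mathcal K := \{E \in \mathcal B(\mathbb R) : \xi(E) \in \mathcal T\}$ contains all intervals $(-\infty,t)$ and all finite disjoint unions of half-open intervals, is closed under complements and countable disjoint unions, and therefore coincides with $\mathcal B(\mathbb R)$. The composition $x := h \circ \xi$ is then an observable on $M$ with $x((-\infty,t)) = x_t$, and uniqueness is verified by the same Dynkin argument applied to $\{E : x(E) = y(E)\}$ for any rival observable $y$ satisfying $(3.1)$.
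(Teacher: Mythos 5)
Your overall architecture matches the paper's: lift $\{x_{r_n}\}$ monotonically into a representing effect-tribe, glue into distribution functions $F_\omega$, invoke \cite[Thm 43.2]{Hal}, and finish with the Dynkin-system argument, exactly as in Theorem \ref{th:2.2}. You also correctly locate the one genuinely new difficulty: an interval-lifting analogue of Lemma \ref{le:2.1} for a surjection $h:\mathcal T\to M$ when $\mathcal T$ is only an effect-tribe, not a lattice-ordered tribe. But at precisely that point your argument has a genuine gap. You propose to take an arbitrary preimage $C_1$ of $c$, truncate it pointwise into $[A,B]$ (i.e.\ form $\max\{A,\min\{B,C_1\}\}$), and then ``use RDP to check that the truncated function still lies in $\mathcal T$.'' Nothing in the definition of an effect-tribe guarantees closure under pointwise $\max$ and $\min$, and RDP of $M$ (or even of $\mathcal T$) gives no visible route to showing that this particular truncation stays in $\mathcal T$, nor that $h$ of the truncation equals $c$, when the Loomis--Sikorski theorem is used as a black box. (Also, the relevant representation theorem here is \cite[Thm 4.1]{BCD}, not \cite{Dvu1}, which is the MV-algebra/$\ell$-group version.)

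The paper closes this gap differently: it does not use the abstract statement of the representation theorem but its concrete construction. There $\Omega=\mathcal S(M)$, and $\mathcal T$ is the set of fuzzy sets $f$ on $\mathcal S(M)$ for which there is $b\in M$ with $N(f-\hat b)$ meager in $\partial_e\mathcal S(M)$ (written $f\sim b$), with $h(f):=b$. In this concrete tribe one first shows (Claim 1) that if $a\le b$ one may choose comparable lifts, because $\max\{f,g\}\sim b$ and $\min\{f,g\}\sim a$ when $f\sim a$, $g\sim b$: off a meager set, $f=\hat a\le \hat b=g$ pointwise, so the max and min agree with $\hat b$ and $\hat a$ there, and membership in $\mathcal T$ is automatic from the defining $\sim$-condition. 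Claim 2 (the interval lifting you need) then follows by setting $s=\max\{f,\min\{g,s_1\}\}$ for any lift $s_1$ of $c$ and running the same meager-set argument. So the missing idea is to open up the proof of \cite[Thm 4.1]{BCD} and exploit the meager-set description of $\mathcal T$ and $h$; without that (or some other concrete handle on $\mathcal T$), your truncation step is unsupported, and this is the step on which the whole induction producing the monotone lifts $\{b_n\}$ rests.
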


\begin{proof} By the Loomis-Sikorski Theorem, \cite[Thm 4.1]{BCD}, we know that if $M$ is a monotone $\sigma$-complete effect algebra with RDP, then there are  a nonempty set $\Omega$, an effect-tribe $\mathcal T \subseteq [0,1]^\Omega$  with {\rm RDP}, and a $\sigma$-homomorphism $h$ from ${\mathcal T}$ onto $E$.

In the proof of the Loomis-Sikorski Theorem, \cite[Thm 4.1]{BCD}, it was used $\Omega =\mathcal S(M),$ and $\mathcal T$ was the class of all fuzzy sets $f$ on
${\mathcal S}(M)$  with the property that there exists $b \in M$ such that
$N(f-\hat b)$ is a meager subset of $\partial_e {\mathcal S}(M)$ (in the
relative topology); we write $f \sim b$. The $\sigma$-homomorphism $h$ was then defined by $h(f):=b$ iff $f \sim b.$

\vspace{2mm} {\it Claim 1.} {\it If $a\le b,$ $a,b \in M$, there are $f,g \in \mathcal T$ such that $f\le g$ and $h(f)= a,$ $h(g)=b.$}

\vspace{2mm} Indeed,  let $f \sim a$ and $g\sim b$ for some $f,g \in \mathcal T.$ We have  $\max\{f,g\} \sim a\vee b =b,$ which entails $\max\{f,g\} \in \mathcal T$ and $\max\{f,g\} \sim b.$ In a similar way, $\min\{f,g\}\in \mathcal T$ and $\min\{f,g\} \sim a.$

\vspace{2mm} {\it Claim 2.} {\it If $f,g \in \mathcal T,$ $f\le g,$ and let $c$ be an element of $M$ such that $h(f)\le c\le h(g).$  Then there exists a function $s \in \mathcal T$ such that $f\le s\le g$ and $h(s)=c.$}

\vspace{2mm}
Since $h$ is onto, there is a function  $s_1\in \mathcal T$ such that $h(s_1)=c.$ If we set $s = \max\{f, \min\{g,s_1\}\},$ by Claim 1, $s \in \mathcal T $ and    $s$ is the function in question.

\vspace{2mm}  Let again $r_1,r_2,\ldots $ be any enumeration of the set $\mathbb Q$ of rational numbers. Let $a_n \in \mathcal T$ be such a function that $a_n \sim x_{r_n}.$ Using Claim 2, similarly as in the proof of Theorem \ref{th:2.2}, we can find a sequence of functions $\{b_n\}$ from $\mathcal T$ such that $b_n \sim x_{r_n}$ and $b_n \le b_m$ whenever $r_n < r_m.$

Now the end of the present proof is identical with the end of the proof of Theorem \ref{th:2.2}.
\end{proof}

Finitely we discuss this problem also for one of the most important examples of effect algebras, the system $\mathcal E(H)$ of all Hermitian operators on a (real, complex, or quaternionic) Hilbert space $H$ which are between the zero and the identity operators.  This effect algebra is monotone $\sigma$-complete which is not a lattice and RDP fails. This example is not covered by any of the previous theorems.

\begin{theorem}\label{th:2.10}
If there is a system $\{x_t: t \in \mathbb R\}$ of elements of $\mathcal E(H)$ satisfying {\rm
(3.2)--(3.4)}, then there is a unique observable $x$ on $\mathcal E(H)$ such that $(3.1)$ holds for any $t \in \mathbb R.$
\end{theorem}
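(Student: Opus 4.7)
The plan is to bypass the Loomis--Sikorski machinery (which does not apply to $\mathcal E(H)$, since it is neither a lattice nor satisfies RDP) and work directly from the Hilbert-space structure. The idea is to use vector states $\phi\mapsto\langle\cdot\,\phi,\phi\rangle$ to reduce the problem to scalar distribution functions, one per vector, and then to assemble the observable as a POV-measure by polarization and the Riesz representation theorem.

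First, for each $\phi\in H$ I would set $F_\phi(t):=\langle x_t\phi,\phi\rangle$. Assumption (3.2) makes $F_\phi$ non-decreasing; (3.4) combined with the fact that a monotone increasing bounded net of Hermitian operators converges strongly to its supremum yields left-continuity $F_\phi(t)=\sup_{s<t}F_\phi(s)$; and (3.3) analogously gives $\lim_{t\to-\infty}F_\phi(t)=0$ and $\lim_{t\to+\infty}F_\phi(t)=\|\phi\|^2$. Hence $F_\phi$ is, up to normalization, a distribution function, so by \cite[Thm 43.2]{Hal} there is a unique finite positive Borel measure $\mu_\phi$ on $\mathcal B(\mathbb R)$ with $\mu_\phi((-\infty,t))=F_\phi(t)$ and $\mu_\phi(\mathbb R)=\|\phi\|^2$.

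Next, by polarization I would define the complex Borel measure
$$\mu_{\phi,\psi}:=\tfrac{1}{4}\sum_{k=0}^{3}i^k\,\mu_{\phi+i^k\psi},\qquad \phi,\psi\in H,$$
(with the obvious real or quaternionic modification in the non-complex cases). The polarization identity applied to the Hermitian operator $x_t$ shows that $\mu_{\phi,\psi}((-\infty,t))=\langle x_t\phi,\psi\rangle$. A Dynkin-class argument then lifts properties from intervals to all Borel sets: the set of $E\in\mathcal B(\mathbb R)$ for which $(\phi,\psi)\mapsto\mu_{\phi,\psi}(E)$ is sesquilinear, satisfies Hermitian symmetry $\overline{\mu_{\phi,\psi}(E)}=\mu_{\psi,\phi}(E)$, and satisfies positivity $\mu_{\phi,\phi}(E)=\mu_\phi(E)\ge 0$, is a $\lambda$-system containing the $\pi$-system $\{(-\infty,t):t\in\mathbb R\}$, so equals $\mathcal B(\mathbb R)$. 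Cauchy--Schwarz for positive sesquilinear forms now gives $|\mu_{\phi,\psi}(E)|^{2}\le\mu_\phi(E)\mu_\psi(E)\le\|\phi\|^{2}\|\psi\|^{2}$, whence the Riesz representation theorem produces a unique bounded operator $x(E)$ with $\langle x(E)\phi,\psi\rangle=\mu_{\phi,\psi}(E)$; Hermitian symmetry of $\mu_{\cdot,\cdot}(E)$ makes $x(E)$ self-adjoint and $0\le\mu_{\phi,\phi}(E)\le\|\phi\|^{2}$ forces $0\le x(E)\le I$, i.e. $x(E)\in\mathcal E(H)$.

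Finally, finite additivity of the scalar measures $\mu_{\phi,\phi}$ immediately gives finite additivity of $x$, and $E_n\nearrow E$ implies $x(E_n)\le x(E)$ with $\langle x(E_n)\phi,\phi\rangle\to\langle x(E)\phi,\phi\rangle$ for every $\phi$, so strong convergence of monotone bounded nets of Hermitian operators yields $\bigvee_n x(E_n)=x(E)$ in $\mathcal E(H)$. Thus $x$ is an observable with $x((-\infty,t))=x_t$. Uniqueness follows exactly as at the end of the proof of Theorem \ref{th:2.2}: if $y$ is another such observable, then $\{E\in\mathcal B(\mathbb R):x(E)=y(E)\}$ is a $\sigma$-algebra containing the generators $\{(-\infty,t):t\in\mathbb R\}$, hence equals $\mathcal B(\mathbb R)$. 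The main obstacle I anticipate is the middle step: promoting the polarization identity from the one-parameter family $\{x_t\}$ to genuine sesquilinearity of $\mu_{\cdot,\cdot}(E)$ on \emph{every} Borel set $E$, where the absence of a lattice operation or RDP in $\mathcal E(H)$ rules out any algebraic shortcut and forces the Dynkin/monotone-class argument to carry the weight.
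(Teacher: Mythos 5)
Your argument is correct, and it reaches the paper's conclusion by a genuinely different implementation of the key existence step. The paper also starts from the vector-wise distribution functions $F_\phi$ and the measures $P_\phi$ obtained from \cite[Thm 43.2]{Hal}, but it then transfers everything to the effect-tribe $\mathcal T(H)=\{f_A: A\in\mathcal E(H)\}$ of quadratic-form functions on unit vectors, which is $\sigma$-isomorphic to $\mathcal E(H)$, and runs the Dynkin-system argument on $\{E\in\mathcal B(\mathbb R):\ \xi(E)\in\mathcal T(H)\}$; there the operator corresponding to a general Borel set comes for free from the closure of $\mathcal T(H)$ under complements and monotone pointwise limits (i.e.\ Vigier's theorem on strong convergence of bounded monotone nets of Hermitian operators, the same fact you invoke for left-continuity and for $\bigvee_n x(E_n)=x(E)$). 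You instead rebuild each operator $x(E)$ explicitly: polarization gives complex measures $\mu_{\phi,\psi}$, a $\pi$--$\lambda$ argument promotes sesquilinearity, Hermitian symmetry and positivity from half-lines to all Borel sets, and Cauchy--Schwarz plus the Riesz representation theorem produce $x(E)\in\mathcal E(H)$. What each route buys: yours is self-contained (it does not presuppose that $\mathcal T(H)$ is an effect-tribe isomorphic to $\mathcal E(H)$) and is the standard POV-measure construction, at the price of having to adapt polarization in the real and quaternionic cases, which you rightly flag; the paper's route is shorter because that subtlety is absorbed into the quadratic-form isomorphism $f_A\mapsto A$, and it treats all three scalar fields uniformly. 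Two small wording points: in the uniqueness step the set $\{E: x(E)=y(E)\}$ is a priori only a Dynkin ($\lambda$-) system, not obviously a $\sigma$-algebra, since observables on $\mathcal E(H)$ need not preserve intersections; the $\pi$--$\lambda$ theorem applied to the generating $\pi$-system $\{(-\infty,t)\}$ finishes it, exactly as in the paper's Theorem \ref{th:2.2}. Likewise, when using (3.3)--(3.4) you should note that the order-theoretic suprema and infima postulated in $\mathcal E(H)$ coincide with the strong limits given by Vigier's theorem, which is immediate but is the point where the hypothesis enters.
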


\begin{proof}  Let $\Omega(H)$ be the set of all unit vectors $\phi\in H.$ Given a Hermitian operator $A,$ let $f_A: \Omega(H) \to [0,1]$ be a function defined by $f_A(\phi):=(A\phi,\phi),$ $\phi \in \Omega(H),$ where $(\cdot,\cdot)$ is an inner product on $H.$ Then the system $\mathcal T(H)=\{f_A: A \in \mathcal E(H)\}$ is an effect-tribe which is $\sigma$-isomorphic with $\mathcal E(H)$ under the isomorphism $\imath: \phi_A \mapsto A.$

Let $\{x_t: t \in \mathbb R\}$ be a system of elements of $\mathcal E(H)$ satisfying {\rm (3.2)--(3.4)}.  Given a unit vector $\phi \in \Omega(H),$ let $F_\phi: \mathbb R \to [0,1]$ be defined by $F_\phi(t):= (x_t\phi,\phi),$ $t \in \mathbb R.$ Then $F_\phi$ is a nondecreasing, left continuous function such that $\lim_{t\to -\infty} F_\phi(t)=0$ and  $\lim_{t\to \infty} F_\phi(t)=1,$ i.e. $F_\phi$ is a distribution function corresponding to a unique probability measure $P_\phi$ on $\mathcal B(\mathbb R),$  \cite[Thm 43.2]{Hal}. Hence, we define $\xi: \mathcal B(\mathbb R) \to [0,1]^{\Omega(H)}$ by  $\xi(E)(\phi):= P_\phi(E),$ $E \in \mathcal B(\mathbb R),$ $\phi \in \Omega(H).$ To show that $\xi(E) \in \mathcal T(H),$  let $\mathcal K$ be the set of all Borel sets $E \in \mathcal B(\mathbb R)$ such that $\xi(E) \in \mathcal T(H).$  Similarly as in the proof of Theorem \ref{th:2.2}, $\mathcal K$ is a Dynkin system such that $\mathcal K = \mathcal B(\mathbb R).$ Hence, $\xi$
is an observable on $\mathcal T(H)$, and $x :=\imath \circ \xi$ is an observable on $\mathcal E(H)$ such that $x((-\infty,t))=x_t,$ $t \in \mathbb R.$

The uniqueness follows from the same arguments as the uniqueness in the proof of Theorem \ref{th:2.2}.
\end{proof}

If an effect algebra is an effect-tribe, we can also prove a characterization of an observable via the system $\{x_t: t \in \mathbb R\}.$

\begin{theorem}\label{th:2.11}
Let $\mathcal T$ be an effect-tribe on a nonempty set $\Omega.$
If there is a system $\{x_t: t \in \mathbb R\}$ of elements of $ \mathcal T$ satisfying {\rm
(3.2)--(3.4)}, then there is a unique observable $x$ on $\mathcal T$ such that $(3.1)$ holds for any $t \in \mathbb R.$
\end{theorem}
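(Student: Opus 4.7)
The plan is to mimic the pointwise construction used in Theorem~\ref{th:2.10}, which is tailor-made for this situation because an effect-tribe is by definition a family of $[0,1]$-valued functions on $\Omega$ that is closed under complementation, compatible addition, and pointwise monotone sequential limits. Thus I do not need a Loomis--Sikorski step here; everything can be read off fibrewise from the data $\{x_t : t \in \mathbb R\}$.

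First, fix $\omega \in \Omega$ and define $F_\omega(t) := x_t(\omega)$. Conditions (3.2)--(3.4) immediately force $F_\omega$ to be nondecreasing, left-continuous, with $\lim_{t\to -\infty} F_\omega(t)=0$ and $\lim_{t\to +\infty} F_\omega(t)=1$: the monotonicity follows from (3.2) (since $x_t \le x_s$ in $\mathcal T$ means pointwise inequality of fuzzy sets), the left-continuity from (3.4) (a pointwise supremum of an increasing sequence of functions), and the limits at $\pm\infty$ from (3.3). By \cite[Thm 43.2]{Hal}, $F_\omega$ is the distribution function of a unique probability measure $P_\omega$ on $\mathcal B(\mathbb R)$. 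Define
\[
\xi(E)(\omega) := P_\omega(E), \qquad E \in \mathcal B(\mathbb R),\ \omega \in \Omega,
\]
so that $\xi(E)$ is a fuzzy set on $\Omega$ with $\xi((-\infty,t)) = x_t$ for every $t \in \mathbb R$.

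The main point is to show that $\xi(E) \in \mathcal T$ for every $E \in \mathcal B(\mathbb R)$. Let $\mathcal K := \{E \in \mathcal B(\mathbb R) : \xi(E) \in \mathcal T\}$. Since $\xi(\mathbb R) = 1 \in \mathcal T$, since $\xi(\mathbb R \setminus E)(\omega) = 1 - \xi(E)(\omega)$ and $\mathcal T$ is closed under $f \mapsto 1-f$, and since countable disjoint unions go over under $\xi(\cdot)(\omega)$ to pointwise summation of mutually $\le 1$-complementary families (which $\mathcal T$ admits by the partial addition axiom together with closure under monotone pointwise limits of the partial sums), $\mathcal K$ is a Dynkin system. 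Moreover $\mathcal K$ contains every $(-\infty,t)$, hence every half-open interval $[a,b) = (-\infty,b) \setminus (-\infty,a)$, hence every finite disjoint union of such intervals; these form a $\pi$-system generating $\mathcal B(\mathbb R)$, so by \cite[Thm 2.1.10]{Dvu} (exactly as in the proof of Theorem~\ref{th:2.2}) $\mathcal K = \mathcal B(\mathbb R)$. Therefore $\xi$ maps into $\mathcal T$.

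It is then routine that $x := \xi$ is an observable: additivity on disjoint Borel sets and the monotone sequential continuity condition (iii) in the definition of observable follow fibrewise from the corresponding properties of the probability measures $P_\omega$, combined with closure of $\mathcal T$ under pointwise monotone sequential limits. By construction $x((-\infty,t)) = x_t$ for every $t \in \mathbb R$. For uniqueness, if $y$ is any other observable on $\mathcal T$ with $y((-\infty,t)) = x_t$, the set $\mathcal H := \{E \in \mathcal B(\mathbb R) : x(E) = y(E)\}$ is again a Dynkin system containing the $\pi$-system of finite disjoint unions of left-open intervals, and hence equals $\mathcal B(\mathbb R)$. The only slightly delicate point in the whole argument is the verification that $\mathcal K$ is closed under countable disjoint unions inside an effect-tribe rather than inside a genuine $\sigma$-algebra of sets, but this is exactly what axioms (iii) and (iv) in the definition of effect-tribe provide via the partial sums $\sum_{i=1}^{n} \xi(E_i) \nearrow \xi(\bigcup_i E_i)$.
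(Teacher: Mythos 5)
Your proposal is correct and follows essentially the same route as the paper: the paper's proof of Theorem~\ref{th:2.11} likewise fixes $\omega\in\Omega$, reads off the distribution function $F_\omega(t)=x_t(\omega)$, passes to the probability measures $P_\omega$, defines $x(E)(\omega)=P_\omega(E)$, and invokes the Dynkin-system argument of Theorems~\ref{th:2.2} and~\ref{th:2.10} for membership in $\mathcal T$ and for uniqueness. Your added remarks (that no Loomis--Sikorski step is needed and that closure of $\mathcal K$ under countable disjoint unions uses axioms (iii)--(iv) of an effect-tribe via the partial sums) only spell out details the paper leaves implicit, the one tacit point in both treatments being that the suprema in (3.3)--(3.4), taken in the effect-algebra order of $\mathcal T$, coincide with pointwise limits along countable monotone subsequences.
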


\begin{proof}  Let $\{x_t: t \in \mathbb R\}$ be a given system of functions from $\mathcal T$ satisfying (3.2)-(3.4). Let $\omega \in \Omega$ be a fixed element, and define a mapping $F_\omega: \mathbb R \to [0,1]$ via $F_\omega(t)= x_t(\omega),$  $t \in \mathbb R.$ As in the proof of Theorem \ref{th:2.10}, $F_\phi$ is a distribution function corresponding to a unique probability measure $P_\omega$ on $\mathcal B(\mathbb R).$ The mapping $x: \mathcal B(\mathbb R) \to [0,1]^\Omega$ defined by $x(E)(\omega):= P_\omega(E),$ $E \in \mathcal B(\mathbb R),$ $\omega \in \Omega,$ satisfies $x((-\infty,t))=x_t$ for any $t \in \mathbb R,$ and similarly as in the proof of Theorem \ref{th:2.10}, $x$ is a unique observable on $\mathcal T$ in question.
\end{proof}

Finally we study so-called sharp observables.
An element of an effect algebra $E$ is said to be {\it sharp} if $a \wedge a'$ exists in $E$ and $a\wedge a'=0.$ Let $\Sh(E)$ be the set of sharp elements of $E.$ Then (i) $0,1\in \Sh(E),$  (ii) if $a \in \Sh(E),$ then $a'\in \Sh(E).$ If $E$ is a lattice effect algebra, then $\Sh(E)$ is an orthomodular lattice which is a subalgebra and a sublattice of $E,$  \cite{JeRi}. If an effect algebra $E$ satisfies RDP, then by \cite[Thm 3.2]{Dvu2}, $\Sh(E)$ is even a Boolean algebra.

An observable $x$ on a monotone $\sigma$-complete effect algebra is said to be {\it sharp} if its range consists only of sharp elements, that is $\mathcal R(x) \subseteq \Sh(E).$ Sharp observables are important because as it was shown in \cite[Thm 3.4]{JPV1}, every observable on a $\sigma$-lattice effect algebra is a smearing of a sharp observable (for unexplained notions see \cite{JPV1}).  Sharp observables can be characterized as follows.

\begin{theorem}\label{th:sharp}
Let $E$ be either a $\sigma$-lattice effect algebra  or a monotone $\sigma$-complete effect algebra with RDP.
Let $\{x_t: t\in \mathbb R\}$ be a given system of elements of $E$ satisfying {\rm (3.2)-(3.4)} and let every $x_t$ be sharp. Then there is a unique observable $x$ such that $(3.1)$ holds. In addition, this observable is sharp and it preserves finite intersections.
\end{theorem}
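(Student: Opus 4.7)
The plan is as follows. Existence and uniqueness of an observable $x$ with $x((-\infty,t))=x_t$ are already supplied by Theorem~\ref{th:2.5} (in the $\sigma$-lattice case) or by Theorem~\ref{th:2.9} (in the RDP case). What remains is to check that $\mathcal R(x)\subseteq \Sh(E)$ and that $x(E\cap F)=x(E)\wedge x(F)$ for all $E,F\in\mathcal B(\mathbb R)$. The key idea is to locate the whole system $\{x_t:t\in\mathbb R\}$ inside a Boolean $\sigma$-subalgebra $B$ of $E$ and then reduce to Theorem~\ref{th:2.6}, which produces a genuine $\sigma$-homomorphism of Boolean $\sigma$-algebras carrying the prescribed values at $(-\infty,t)$.

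First I would handle the $\sigma$-lattice effect algebra case. Because $\{x_t\}$ is linearly ordered, its members are pairwise compatible and hence lie in a single block $M$, which, according to the paragraph preceding Theorem~\ref{th:2.5}, is in fact a $\sigma$-MV-subalgebra of $E$. In any $\sigma$-MV-algebra the sharp elements coincide with the idempotents and form a Boolean $\sigma$-subalgebra; closure under monotone countable suprema follows from lower continuity of $\oplus$. Since each $x_t$ is sharp in $E$ and lies in $M$, it is also sharp in $M$, so one may take $B:=\Sh(M)$, which is a Boolean $\sigma$-subalgebra of $E$ containing $\{x_t:t\in\mathbb R\}$.

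Next I would treat the RDP case, which I expect to be the main obstacle. By \cite[Thm~3.2]{Dvu2} the set $\Sh(E)$ is already a Boolean algebra, but I need a Boolean $\sigma$-subalgebra of $E$ carrying our system. The plan is to invoke the Loomis--Sikorski representation $h:\mathcal T\to E$ from the proof of Theorem~\ref{th:2.9} and imitate Claims~1 and~2 there, but with the extra constraint that each lift $b_t\in\mathcal T$ of $x_t$ be a characteristic function $\chi_{A_t}$, arranged so that $A_t\subseteq A_s$ whenever $t<s$. Indicator functions in $\mathcal T$ form a Boolean $\sigma$-algebra, so the sub-$\sigma$-field of sets generated by $\{A_t\}$ descends under $h$ to a Boolean $\sigma$-subalgebra $B$ of $\Sh(E)$ containing every $x_t$. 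The hard technical point is precisely the sharp lifting: one must argue that the sharpness of $x_t\in E$ permits a pullback not merely to a fuzzy set but to an actual indicator function, exploiting that sharp elements of an effect-tribe are exactly indicator functions of measurable subsets, and that Claim~2 preserves this sharpness at each inductive step.

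Once $B$ is secured in either case, I would apply Theorem~\ref{th:2.6} to $B$ to obtain a $\sigma$-homomorphism of Boolean $\sigma$-algebras $y:\mathcal B(\mathbb R)\to B$ with $y((-\infty,t))=x_t$. Composed with the inclusion $B\hookrightarrow E$, the map $y$ is an observable on $E$ extending the prescribed system, so the uniqueness clause of Theorem~\ref{th:2.5} or Theorem~\ref{th:2.9} forces $y=x$. This immediately gives $\mathcal R(x)=\mathcal R(y)\subseteq B\subseteq\Sh(E)$, establishing sharpness. Finally, because meets inside the Boolean $\sigma$-subalgebra $B$ agree with the ambient meets in $E$ (in the lattice case since $B$ sits inside the sublattice $M$, and in the RDP case since sharp elements admit ambient meets that coincide with the Boolean meets inherited from $B$), the Boolean homomorphism identity $y(E\cap F)=y(E)\wedge y(F)$ translates into $x(E\cap F)=x(E)\wedge x(F)$, yielding preservation of finite intersections.
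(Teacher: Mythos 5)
Your overall strategy -- place the whole system $\{x_t\}$ inside a Boolean $\sigma$-algebra $B$ of sharp elements whose suprema agree with those of $E$, apply Theorem~\ref{th:2.6} there, and then use the uniqueness clauses of Theorems~\ref{th:2.5}/\ref{th:2.9} to identify the resulting Boolean $\sigma$-homomorphism with the observable $x$ -- is exactly the paper's strategy, and your treatment of the $\sigma$-lattice case (block $M$, then $B=\Sh(M)$, which is the Boolean skeleton of a $\sigma$-MV-algebra) is a sound variant of it. The problem is your RDP branch. There you abandon the direct route and instead try to lift each sharp $x_t$ through the Loomis--Sikorski $\sigma$-homomorphism $h:\mathcal T\to E$ to a characteristic function, justifying this by the claim that ``sharp elements of an effect-tribe are exactly indicator functions of measurable subsets.'' That claim is false in general: the effect-tribe $\mathcal T(H)=\{f_A:A\in\mathcal E(H)\}$ of Theorem~\ref{th:2.10} is $\sigma$-isomorphic to $\mathcal E(H)$, so its sharp elements are the functions $f_P$ with $P$ a projection, and these are not indicator functions on $\Omega(H)$. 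Even restricted to the RDP effect-tribes produced by \cite[Thm 4.1]{BCD}, the statement you actually need -- that a sharp element of $E$ admits a preimage under $h$ which is an indicator, and that Claim~2-type interpolation can be carried out entirely among indicators while keeping the monotone ordering $A_t\subseteq A_s$ -- is precisely the step you flag as ``the hard technical point'' and never prove. As written, the RDP case therefore has a genuine gap.

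The gap is also unnecessary, and this is where the paper's argument is simpler: by \cite[Thm 3.2]{Dvu2} the sharp elements of an effect algebra with RDP are central and $\Sh(E)$ is a Boolean algebra; since $E$ is monotone $\sigma$-complete, $\Sh(E)$ is in fact a Boolean $\sigma$-algebra whose countable suprema (and finite meets) are computed in $E$. Hence the linearly ordered system $\{x_t\}$ already lies in a Boolean $\sigma$-algebra inside $E$ (in the lattice case the paper likewise uses \cite{JeRi} to see $\Sh(E)$ is a $\sigma$-complete orthomodular lattice and takes a block of it), and Theorem~\ref{th:2.6} applies directly -- no return to the Loomis--Sikorski representation, and no lifting of sharpness, is needed. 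If you replace your effect-tribe lifting argument by this observation (and spell out that meets of central elements taken in $\Sh(E)$ agree with meets in $E$, which you invoke for the intersection property), your proof closes and coincides with the paper's.
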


\begin{proof} Due to \cite{JeRi} and \cite[Thm 3.2]{Dvu2}, $\Sh(E)$ is a $\sigma$-complete orthomodular lattice or a Boolean $\sigma$-algebra.  In either case,  $\{x_t: t\in \mathbb R\}$ is a system of mutually compatible elements, so that it lies in a block, $M,$ $M \subseteq \Sh(E),$ which is a Boolean $\sigma$-algebra. Applying Theorem \ref{th:2.6}, we can construct a unique observable $x$ such that (3.0) holds, and $\mathcal R(x) \subseteq M.$ This observable is clearly sharp, and
$x(F)\wedge x(G)$ exists in $E$ and $x(F\cap G)=x(F)\wedge x(F)$ for all $E,F \in \mathcal B(\mathbb R)$ because $x$ is due to Theorem \ref{th:2.6} a $\sigma$-homomorphism of Boolean $\sigma$-algebras between $\mathcal B(\mathbb R)$ and the Boolean $\sigma$-algebra $M.$
\end{proof}

\section{Miscellaneous }

We say that an observable $x$ on a monotone $\sigma$-complete effect algebra $M$ (i) is {\it bounded} if there is a compact set $C$ such that $x(C)=1$, and (ii) has the {\it Jauch-Piron Property} if $x(E)=1=x(F)$, then $x(E\cap F)=1.$ For example, if $M$ is a $\sigma$-MV-algebra, any observable $x$ of $M$ has the Jauch-Piron Property.

\begin{proposition}\label{pr:3.1}
Let an observable $x$ have the Jauch-Piron Property on a monotone $\sigma$-complete effect algebra $M.$ Then there is the least closed set $C$ of $\mathbb R$ such that $x(C)=1.$
\end{proposition}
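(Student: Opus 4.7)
The plan is to exhibit the least closed set $C^{*}\subseteq\mathbb{R}$ with $x(C^{*})=1$ as the intersection
$$
C^{*}:=\bigcap\{C\subseteq\mathbb{R}:C\text{ closed},\ x(C)=1\},
$$
which is automatically closed and contained in every other member of the family; so the real content is to check that $x(C^{*})=1$. I would work with the open complement $U^{*}:=\mathbb{R}\setminus C^{*}=\bigcup\mathcal{U}$, where $\mathcal{U}:=\{U\subseteq\mathbb{R}:U\text{ open},\ x(U)=0\}$, and show $x(U^{*})=0$.

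The reduction to a countable problem is made by invoking the Lindel\"of property of $\mathbb{R}$: the family $\mathcal{U}$ is an open cover of $U^{*}$, hence admits a countable subcover $\{U_{n}\}_{n\ge 1}\subseteq\mathcal{U}$ with $U^{*}=\bigcup_{n}U_{n}$. Set $V_{n}:=U_{1}\cup\cdots\cup U_{n}$, so that $V_{n}\nearrow U^{*}$.

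Next I would show $x(V_{n})=0$ for every $n$ by induction on $n$. Here the Jauch-Piron hypothesis enters: since $x(U_{k})=0$ is equivalent to $x(\mathbb{R}\setminus U_{k})=1$, the Jauch-Piron property applied inductively yields $x\bigl(\bigcap_{k\le n}(\mathbb{R}\setminus U_{k})\bigr)=1$, i.e.\ $x(V_{n})=0$. Then property (iii) of observables (monotone continuity from below) gives
$$
x(U^{*})=\bigvee_{n}x(V_{n})=0,
$$
so $x(C^{*})=x(U^{*})'=1$, as required.

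I expect the main obstacle to be the passage from the possibly uncountable family $\mathcal{C}$ to a countable subfamily: at the effect-algebra level only countable directed joins are available, so some external input is needed, and this is exactly the topological fact that $\mathbb{R}$ is second countable (hence Lindel\"of). The role of the Jauch-Piron property is then precisely to close the class of full-$x$-measure closed sets under finite intersections, which when combined with Lindel\"of and the monotone $\sigma$-completeness of $x$ promotes finite intersections first to countable intersections and ultimately to the arbitrary intersection $C^{*}$.
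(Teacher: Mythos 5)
Your argument is correct and is essentially the paper's proof in dual form: the paper uses second countability to write $\sigma(x)$ as a countable intersection of closed sets of full measure, uses the Jauch-Piron property to make the sequence decreasing, and concludes by continuity from above, while you pass to the open complements, invoke Lindel\"of for a countable subcover, use Jauch-Piron on finite intersections, and conclude by continuity from below. Since complementation translates one argument verbatim into the other, this counts as the same approach, and your write-up is complete.
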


\begin{proof} We set
$$\sigma(x)=\bigcap \{C: C\ \mbox{closed},\ x(C)=1\}.
$$
Since the natural topology of $\mathbb R$ satisfies the second countability axiom, there is a sequence of closed sets $\{C_n\}$ such that $x(C_n)=1$ for all $n,$  and $\sigma(x)=\bigcap_n C_n.$ Since $x$ satisfies the Jauch-Piron Property, we can assume that $\{C_n\} \searrow C$ and $x(\sigma(x))= \bigwedge_n x(C_n)=1.$
\end{proof}

The set $\sigma(x)$ is said to be the {\it spectrum} of $x.$ For example, for the observable $x$ from (3.0), we have $\sigma(x) =\{t_1,t_2,\ldots\}^-,$ where $^-$ denotes the closure. In particular, an observable $x$ is a question iff $\sigma(x)\subseteq \{0,1\}.$

Let $h$ be a homomorphism from an effect algebra $M_1$ into another effect algebra $M_2.$ We say that $h$ has the {\it Jauch-Piron Property}, if $h(a)=1=h(b)$, there is an element $c \in M_1$ such that $c\le a,$ $c \le b$ and $h(c)=1.$
This property is equivalent to the requirement if $h(x)=0=h(y)$, there is $z \in M_1$ such that $x,y \le z$ and $h(z)=0.$

\begin{proposition}\label{pr:3.2}  If an effect algebra $M_1$ satisfies RDP, then any homomorphism $h$ from $M_1$ into an effect algebra $M_2$ has the Jauch-Piron Property.
\end{proposition}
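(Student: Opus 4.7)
The plan is to invoke RDP directly to decompose the pair of equations $a+a' = 1 = b+b'$ and then read off the desired witness from the resulting refinement. Suppose $h(a) = 1 = h(b)$, so that $h(a') = 0 = h(b')$. Since $a' + a = b' + b$ in $M_1$ and $M_1$ has RDP, there exist four elements $c_{11}, c_{12}, c_{21}, c_{22} \in M_1$ with
\[
a' = c_{11} + c_{12}, \quad a = c_{21} + c_{22}, \quad b' = c_{11} + c_{21}, \quad b = c_{12} + c_{22}.
\]
I will propose $c := c_{22}$ as the Jauch-Piron witness.

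The first verification is that $c \le a$ and $c \le b$, which is immediate from the decompositions of $a$ and $b$ above. The second verification is that $h(c) = 1$. Applying $h$ to $a' = c_{11} + c_{12}$ gives $0 = h(c_{11}) + h(c_{12})$, forcing $h(c_{11}) = h(c_{12}) = 0$ (in an effect algebra, $x + y = 0$ implies $x = y = 0$, since $x \le x+y = 0$). Similarly applying $h$ to $b' = c_{11} + c_{21}$ yields $h(c_{21}) = 0$. Then applying $h$ to $a = c_{21} + c_{22}$ gives $1 = h(a) = 0 + h(c_{22})$, so $h(c_{22}) = 1$, as required.

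This establishes the original form of the Jauch-Piron property for $h$. There is essentially no obstacle here; the entire content of the proof is recognizing that RDP, applied to the specific identity $a' + a = b' + b$, delivers the ``meet-like'' element $c_{22}$ below both $a$ and $b$ whose $h$-value must equal $1$ once the vanishing of the other three pieces is forced by $h(a') = h(b') = 0$. The equivalence mentioned just after the definition (that the Jauch-Piron property can be phrased dually with zeros instead of ones) is simply the translation $a \leftrightarrow a'$, $b \leftrightarrow b'$, and one could equivalently give the proof in that form by applying RDP to $a + a' = b + b'$ and taking $z := c_{11}$.
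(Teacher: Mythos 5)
Your main argument is correct and is essentially the paper's own proof, just in the primal form: the paper assumes $h(a)=0=h(b)$, applies RDP to $a+a'=b+b'$, and takes the witness $c_{11}+c_{12}+c_{21}$, which is exactly the complement of your $c_{22}$ after the relabeling $a\leftrightarrow a'$, $b\leftrightarrow b'$. One small slip in your closing remark: in the zero-form, with the decomposition $a=c_{11}+c_{12}$, $b=c_{11}+c_{21}$, the element $c_{11}$ lies \emph{below} $a$ and $b$, whereas that form of the Jauch-Piron property requires an element \emph{above} both with $h$-value $0$; the correct choice there is $z:=c_{22}'=c_{11}+c_{12}+c_{21}$, which is precisely the paper's witness.
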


\begin{proof} Assume $h(a)=0=h(b).$  There are four elements $c_{11},c_{12},c_{21},c_{22}$ such that $a = c_{11}+c_{12},$ $a'= c_{21}+c_{22},$ $b= c_{11} + c_{21}$ and $b'= c_{12}+c_{22}.$ Then for the element $c = c_{11}+c_{12}+c_{21}$ we have $c\ge a,b$ and $h(c)=0.$
\end{proof}

If $s$ is a $\sigma$-additive state on a monotone $\sigma$-complete effect algebra $M,$ i.e. a state $s$ such that $\{a_n\}\nearrow a$ implies $\lim_n s(a_n)=s(a),$ and if $x$ is an observable on $M,$  then $s_x:= s\circ x$ is a probability measure on $\mathcal B(\mathbb R).$  In such a case, we can define the {\it mean value}, $Exp_s(x),$ of $x$ with respect to the $\sigma$-additive state $s$ as follows
$$
Exp_s(x) := \int_{\mathbb R} t \,\dx s_x(t),
$$
assuming that the integral exists. If $f$ is any Borel measurable function,
then $f(x):= x\circ f^{-1}$ is another observable on $M$ and then $Exp_s(f(x))=\int_{\mathbb R} f(t)\, \dx s_x(t),$ if the integral exists.
For example, we can define $x^2, x^3,x^n,$ etc.  Having this, we can calculate e.g. moments of $x,$ and we can use this approach to have some statistical characteristics of observables.

\end{document}